\newcommand{\compconj}[1]{%
  \overline{#1}%
}
\newtheorem{theorem}{Theorem}[section]
\newtheorem{lemma}[theorem]{Lemma}
\newtheorem{proposition}[theorem]{Proposition}
\newtheorem{assumption}[theorem]{Assumption}
\newtheorem{definition}[theorem]{Definition}
\providecommand{\keywords}[1]{\textbf{Keywords---} #1}
\title{Nonlocal Diffusions and The Quantum Black-Scholes Equation: Modelling the Market Fear Factor.}
\author{Will HICKS$^{\ast}$\thanks{$^\ast$ 42 Cranes Park Avenue, Surbiton, KT5 8BP, United Kingdom\newline Email: whicks7940@googlemail.com}}
\begin{document}
\maketitle
\begin{abstract}
In this paper, we establish a link between quantum stochastic processes, and nonlocal diffusions. We demonstrate how the non-commutative Black-Scholes equation of Accardi \& Boukas (cf \cite{AccBoukas}) can be written in integral form. This enables the application of the Monte-Carlo methods adapted to McKean stochastic differential equations (cf \cite{McKean}) for the simulation of solutions. We show how unitary transformations can be applied to classical Black-Scholes systems to introduce novel quantum effects. These have a simple economic interpretation as a market `fear factor', whereby recent market turbulence causes an increase in volatility going forward, that is not linked to either the local volatility function or an additional stochastic variable. Lastly, we extend this system to 2 variables, and consider Quantum models for bid-offer spread dynamics.
\end{abstract}
\keywords{Quantum Black-Scholes, Hudson-Parthasarathy Quantum Stochastic Calculus, Nonlocal Diffusions, McKean Stochastic Differential Equations, Particle Method}
\section{Introduction}
The link between the classical Black-Scholes equation and quantum mechanics and the application of quantum formalism to Mathematical Finance has been investigated by several authors. For example: \cite{AccBoukas}-\cite{Choustova}, \cite{Haven}-\cite{Hidalgo}, \cite{Maslov}, and \cite{McCloud_1}-\cite{Segal}. In particular, the approach of modelling derivative prices using self-adjoint operators on a Hilbert space was suggested by Segal \& Segal in \cite{Segal}. In this paper the authors noted that, in the real world, the market operates with imperfect information and that different observables, such as underlying price and option delta, are usually not simultaneously observable. This fact makes the non-commutative extension of the Black-Scholes framework a natural step. The authors point out that this approach addresses some of the limitations of the classical Black-Scholes model, such as the underestimation of the probability of extreme events - so called ``fat tails". In this sense, non-commutative Quantum models present an alternative means to capture complex market dynamics, without the addition of new stochastic variables.\newline
In \cite{AccBoukas}, Accardi \& Boukas derive a general form for the Quantum Black-Scholes equation based on the Hudson-Parthasarathy calculus (cf \cite{Hudson-Parthasarathy}) and show that a commutative unitary time development operator acting on the market state, leads to a classical Black-Scholes system. Further they give the quantum stochastic differential equation governing the time development operator, and demonstrate how unitary transformations can lead to non-commutativity. An example of a non-commutative Quantum Black-Scholes partial differential equation is derived, although the authors work in an abstract setting and do not discuss specific unitary transformations and Hilbert space representations of financial markets.\newline
Therefore, one objective of this work is to use the Accardi-Boukas framework to look at how different unitary transformations can be used to transform the classical Black-Scholes equation, and to understand how quantum effects become apparent. We then go on to explore an example application of the approach in the modelling of bid-offer spread dynamics. The final objective of the current work is to identify suitable Monte-Carlo methods, which can be used for the simulation of solutions.\newline
In section 2, we give an overview of the Accardi-Boukas derivation of the general form for the Quantum Black-Scholes equation, from \cite{AccBoukas}. With the objective of looking at ``near classical'' Black-Scholes worlds, we then derive specific forms for the resulting partial differential equations that result from small translations, and rotations. This in turn involves the extension of the Accardi-Boukas equation to systems with more than one underlying variable. We go on to discuss how this approach can be applied to the modelling of bid-offer dynamics.\newline
In section 3, we show how this can be linked to the nonlocal diffusion processes discussed by Luczka, H{\"a}nggi and Gadomski in \cite{Luczka}. Here the impact of the diffusion differential operator is spread out through the convolution with a ``blurring" function. The Kramers-Moyal expansion of the nonlocal Fokker-Planck equations allows us to derive the moments of the blurring function for the ``near classical'' quantum system.\newline
This approach allows a natural route to the visualisation of the quantum effects on the system using McKean SDEs (cf \cite{McKean}). The Monte-Carlo methods, developed by Guyon, and Henry-Labord\`ere in \cite{Guyon}, can then be adapted to the simulation of solutions. This is discussed in section 4, where we present numerical results and show how, by introducing small transformations to the system, the stochastic process now reacts to a market downturn by returning higher volatility. This effect is observed even where there is a single static Black-Scholes type volatility.
\section{Quantum Black-Scholes equation}
In this section we follow the notation given, by Accardi \& Boukas, in \cite{AccBoukas}. The current market is represented by a vector in a Hilbert space: $\mathbb{H}$, which contains all relevant information about the state of the market at an instant in time. The tradeable price for a security is represented by an self-adjoint operator on $\mathbb{H}$: $X$, and the the spectrum of $X$ represents possible prices.\newline
\break
Let $L^2[\mathbb{R}^+;\mathbb{H}]$ represent functions from the positive real axis (time) to the Hilbert space $\mathbb{H}$. Then the random behaviour of tradeable securities can be modelled using the tensor product of $\mathbb{H}$ with the bosonic Fock space: $\mathbb{H}\otimes\Gamma (L^2[\mathbb{R}^+;\mathbb{H}])$. We term this the ``market space". The operator that returns the current price becomes $X\otimes \mathbb{I}$, where $\mathbb{I}$ represents the identity operator. The time development of $X\otimes \mathbb{I}$ into the future is modelled by:\newline
\break
$j_t(X)=U_t^*X\otimes\mathbb{I}U_t$\newline
\break
$\mathbb{H}$ carries the initial state of the market and $U_t$ acts by introducing random fluctuations that fill up the empty states in: $\Gamma (L^2[\mathbb{R}^+;\mathbb{H}])$. The functional form for $U_t$ is derived by Hudson \& Parthasarathy in \cite{Hudson-Parthasarathy}, and is given by:\newline
\break
$dU_t=-\Bigg(\bigg(iH+\frac{1}{2}L^*L\bigg)dt+L^*SdA_t-LdA^{\dagger}_t+\bigg(1-S\bigg)d\Lambda_t\Bigg)U_t$\newline
\break
$dA^{\dagger}_t, dA_t, d\Lambda_t$ represent the standard creation, annihilation, and Poisson operators of quantum stochastic calculus. $H, S$ and $L$ also operate on the market space, with $S$ unitary, and $H$ self-adjoint. The multiplication rules of the Hudson-Parthasarathy calculus are given below (cf \cite{Hudson-Parthasarathy}):\newline
\break
\begin{tabular}{p{1cm}|p{1cm}p{1cm}p{1cm}p{1cm}}
-&$dA^{\dagger}_t$&$d\Lambda_t$&$dA_t$&$dt$\\
\hline
$dA^{\dagger}_t$&0&0&0&0\\
$d\Lambda_t$&$dA^{\dagger}_t$&$d\Lambda_t$&0&0\\
$dA_t$&$dt$&$dA_t$&0&0\\
$dt$&0&0&0&0
\end{tabular}\newline
\break
The first thing to note is that, for $S\neq 1$, there is a non-zero Poisson term and the time development operator is non-commutative.\newline
\break
The next thing to note is that, where $S=1$, the Poisson term disappears. The model can be written using the Ito calculus in place of the more general Hudson-Parthasarathy framework. The Wiener process $dW_t$ can be modelled using: $dA_t+dA^{\dagger}_t$.\newline
\break
Let $V_T=j_T(X-K)^+$, represent the option price process as at final expiry $T$, and $K$ the operator given by multiplying by the strike. Further, for $V_t=j_t(X-K)^+$ the following expansion is assumed:\newline
\break
$V_t=F(t,x)=\sum_{n,k} a_{n,k}(t-t_0)^n(x-x_0)^k$\newline
\break
The Hudson-Parthasarathy multiplication rules can be applied to this expansion to give a quantum stochastic differential equation for $V_t$, that corresponds to the usual Ito expansion used in the derivation of the classical Black-Scholes. By assuming one can construct a hedge portfolio by holding the underlying and a risk free numeraire asset, Accardi \& Boukas are able to derive the general form the Quantum Black Scholes equation using the assumption that any portfolio must be self financing. Proposition 1, from \cite{AccBoukas} gives the full Quantum Black-Scholes equation:\newline
\break
\begin{equation}\label{QBS}
a_{1,0}(t,j_t(X))+a_{0,1}(t,j_t(X))j_t(\theta)+\sum_{k=2}^{\infty} a_{0,k}(t,j_t(X))j_t(\alpha\lambda^{k-2}\alpha^{\dagger})=a_tj_t(\theta)+V_tr-a_tj_t(X)r
\end{equation}
Here, $a_t$ represents the holding in the underlying asset and is given by the boundary conditions:\newline
\break
$\sum_{k=1}^{\infty} a_{0,k}(t,j_t(X))j_t(\lambda^{k-1}\alpha^{\dagger})=a_tj_t(\alpha^{\dagger})$\newline
$\sum_{k=1}^{\infty} a_{0,k}(t,j_t(X))j_t(\alpha\lambda^{k-1})=a_tj_t(\alpha)$\newline
$\sum_{k=1}^{\infty} a_{0,k}(t,j_t(X))j_t(\lambda^{k})=a_tj_t(\lambda)$\newline
\break
Further, $\theta, \alpha$ and $\lambda$ are given by:\newline
\break
$\alpha=[L^*,X]S$, $\lambda=S^*XS-X$, $\theta=i[H,X]-\frac{1}{2}\{L^*LX+XL^*L+2L^*XL\}$. In this case the boundary conditions arise because when the Poisson term: $d\Lambda$ is non-zero, unlike Ito calculus where expansion terms with order above 2 can be ignored, higher order terms still contain non-vanishing contribution.
\subsection{Translation}
The natural Hilbert space for an equity price (say the FTSE price) is: $\mathbb{H}=L^2[\mathbb{R}]$. In this case, the only unitary transactions we can use are the translations:\newline
\break
$T_\epsilon:f(x)\rightarrow f(x-\varepsilon)$\newline
\break
Here we have, for a translation invariant Lebesgue measure $\mu$:\newline
\break
$\langle T_\varepsilon f|T_\varepsilon g\rangle = \int_{\mathbb{R}} \compconj{f(x-\varepsilon)}g(x-\varepsilon)d\mu=\int_{\mathbb{R}} \compconj{f(x)}g(x)d\mu=\langle f|g\rangle$\newline
\break
So S is unitary in this case. Therefore, translating by $\varepsilon$ we get:\newline
\break
$\lambda=T_{-\varepsilon}XT_\varepsilon f(x) - Xf(x)=T_{-\varepsilon}xf(x-\varepsilon )-xf(x)=(x+\varepsilon )f(x)-xf(x)=\varepsilon f(x)$\newline
\break
So we have $\lambda=\varepsilon$, and it is clear the example given in \cite{AccBoukas} relates to a translation by $\varepsilon=1$. Following the key steps from \cite{AccBoukas} Proposition 3, and inserting this back into equation \ref{QBS}, we get the following Quantum Black-Scholes partial differential equation for this system:\begin{lemma}
Let $u(t,x)$ represent the price at time t, of a derivative contract in the system described above under small translation $\varepsilon$, and with interest rate $r$. Then the quantum Black-Scholes equation becomes:
\begin{equation} \label{eq:1}
\frac{\partial u(t,x)}{\partial t}=rx\frac{\partial u(t,x)}{\partial x}-u(t,x)r+\sum_{k=2}^{\infty}\frac{\varepsilon^{k-2}}{k!}\frac{\partial^k u(t,x)}{\partial x^k}g(x)
\end{equation}
\end{lemma}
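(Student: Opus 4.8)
\emph{Proof proposal.} The plan is to specialise the general operator identity \eqref{QBS} together with its boundary conditions to the concrete representation of the translation model, and then to read off the scalar equation from the expansion $F(t,x)=\sum_{n,k}a_{n,k}(t-t_0)^n(x-x_0)^k$. Work on $\mathbb{H}=L^2(\mathbb{R})$ with $X$ multiplication by $x$ and $S=T_\varepsilon$; we have already seen that $\lambda=S^*XS-X=\varepsilon\,\mathbb{I}$ is a \emph{scalar}, and this is the single fact that makes the whole computation collapse. First I would record its consequences: in the series in \eqref{QBS} one has $\alpha\lambda^{k-2}\alpha^{\dagger}=\varepsilon^{k-2}\,\alpha\alpha^{\dagger}$, and since $S$ is unitary, $\alpha\alpha^{\dagger}=[L^*,X]\,SS^*\,[L^*,X]^{\dagger}=[L^*,X][L^*,X]^{\dagger}$, a nonnegative operator which in this representation acts as multiplication by a function; this is the $g(x)$ of the statement (the analogue of $\sigma^2x^2$ in the classical limit). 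Hence the series in \eqref{QBS} becomes $g(x)\sum_{k\ge 2}\varepsilon^{k-2}a_{0,k}(t,j_t(X))$.

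Next I would reduce the three boundary conditions. With $\lambda=\varepsilon\mathbb{I}$ they read $\bigl(\sum_{k\ge1}\varepsilon^{k-1}a_{0,k}\bigr)j_t(\alpha^{\dagger})=a_tj_t(\alpha^{\dagger})$, $\bigl(\sum_{k\ge1}\varepsilon^{k-1}a_{0,k}\bigr)j_t(\alpha)=a_tj_t(\alpha)$ and $\varepsilon\sum_{k\ge1}\varepsilon^{k-1}a_{0,k}=\varepsilon a_t$, so after cancelling the common factors they all collapse to the single relation $a_t=\sum_{k\ge1}\varepsilon^{k-1}a_{0,k}$. Then I would invoke the dictionary between the expansion coefficients and the derivatives of $u$: $a_{1,0}\mapsto\partial_t u$, $a_{0,1}\mapsto\partial_x u$, $a_{0,k}\mapsto\frac{1}{k!}\partial_x^k u$, with $j_t(X)\mapsto x$ and $V_t\mapsto u$. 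Under this dictionary the reduced boundary condition becomes $a_t=\sum_{k\ge1}\frac{\varepsilon^{k-1}}{k!}\partial_x^k u$, and the series from \eqref{QBS} becomes exactly $\sum_{k\ge2}\frac{\varepsilon^{k-2}}{k!}\partial_x^k u\,g(x)$ --- the last term of \eqref{eq:1}.

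It remains to handle $a_{1,0}$, $a_{0,1}j_t(\theta)$ on the left of \eqref{QBS} and $a_tj_t(\theta)+V_tr-a_tj_t(X)r$ on the right. Substituting $a_{1,0}\mapsto\partial_t u$, $V_tr\mapsto ur$, $a_tj_t(X)r\mapsto a_t x r$ and grouping the $\theta$-terms into $(a_{0,1}-a_t)j_t(\theta)$, one is in exactly the situation of \cite{AccBoukas}, Proposition 3 specialised to $S=1$: there $a_t=a_{0,1}$ makes the $j_t(\theta)$ contribution cancel identically, so that the pair $rx\partial_x u-ur$ is produced entirely by $V_tr-a_tj_t(X)r$. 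The same cancellation carries over to the near-classical translation model --- the residual $(a_{0,1}-a_t)j_t(\theta)$ being of the order that is discarded in passing to the form \eqref{eq:1} --- and we are left with $\partial_t u=rx\partial_x u-ur+\sum_{k\ge2}\frac{\varepsilon^{k-2}}{k!}\partial_x^k u\,g(x)$, which is \eqref{eq:1}.

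\emph{Main obstacle.} The delicate point is the passage from the operator identity to the scalar PDE, and in particular the treatment of $j_t(\theta)$: unlike $j_t(X)$ and $j_t(\alpha\alpha^{\dagger})$, the operator $j_t(\theta)$ is not a function of $j_t(X)$ and does not commute with the coefficient operators $a_{0,k}(t,j_t(X))$ that stand next to $j_t(\alpha^{\dagger})$ and $j_t(\alpha)$ in the boundary conditions, so both the cancellation of $j_t(\theta)$ and the cancellation of the common factors in those conditions have to be justified carefully rather than by formal division. Running alongside this is the need to control convergence of the $\varepsilon$-series, which confines the argument to analytic payoffs $u$ and small $\varepsilon$ --- exactly the ``near-classical'' regime of the lemma. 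Once these two points are dealt with, the rest is the bookkeeping sketched above.
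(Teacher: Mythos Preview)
Your approach is essentially the paper's: the paper's own proof is nothing more than ``follow Accardi--Boukas Proposition~3 with small modifications'', and you have correctly identified those modifications as the single fact $\lambda=\varepsilon\mathbb{I}$ together with its consequences for the series in \eqref{QBS} and the collapse of the three boundary conditions into $a_t=\sum_{k\ge1}\varepsilon^{k-1}a_{0,k}$.

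There is one point where your treatment diverges from the paper's and is weaker for it. You claim $g(x)=\alpha\alpha^{\dagger}$ and then argue that the residual $(a_{0,1}-a_t)j_t(\theta)$ is ``of the order that is discarded''. Nothing is discarded. If you carry the algebra through exactly as the paper does in the rotation case (Proposition~2.5), you find, after substituting $a_t=\sum_{k\ge1}\varepsilon^{k-1}a_{0,k}$ and cancelling the $k=1$ term against $a_{0,1}j_t(\theta)$, that the remaining $\theta$-contribution combines with the $\alpha\alpha^{\dagger}$ term to give
\[
\sum_{k\ge2}\varepsilon^{k-2}a_{0,k}\,j_t\!\bigl(\alpha\alpha^{\dagger}-\lambda(\theta-Xr)\bigr),
\]
so that the correct identification is $g(x)=\alpha\alpha^{\dagger}-\varepsilon(\theta-xr)$, exactly parallel to the paper's $g_1(x,\epsilon)=j_t(\alpha_x\alpha_x^{*}-\lambda_x(\theta_x-xr))$ in the two-dimensional setting. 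Since the lemma does not specify $g$, your conclusion \eqref{eq:1} still stands, but the route to it should absorb the $\theta$-term into $g$ rather than wave it away.
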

\begin{proof}
The proof follows the same steps Accardi \& Boukas outline in \cite{AccBoukas} proposition 3, with small modifications.
\end{proof}
For $\varepsilon=0$, the last term drops out, and the equation reverts to the classical Black-Scholes. We investigate the impact of non-zero $\varepsilon$ in section 4.
\subsection{Rotation}
For the one dimensional market space: $L^2[\mathbb{R}]$, the Lebesgue invariant translations, are the only unitary transformations available. However, the true current state of the financial market contains a much richer variety of information than just a single price, and by increasing the dimensionality of the Market space accordingly we introduce a wider variety of unitary transformations, that can introduce non-commutativity. For example, let $x$ represent the FTSE mid-price, and $\epsilon$ half of the bid-offer spread so that $(x+\epsilon)$ represents the best offer-price and $(x-\epsilon)$ the best bid-price. Now the market is represented by the Hilbert space: $\mathbb{H}=L^2[\mathbb{R}^2]$, and we can apply rotations, in addition to translations.\newline
\break
We make the simplifying assumption that market participants can trade the mid-price: $x$ (for example during the end of day auction process) and that the market has sufficient liquidity to enable participants to alternatively act as market makers (receiving bid-offer spread) or as hedgers (crossing bid-offer spread) and therefore trade the bid-offer spread: $\epsilon$. Therefore we make the following assumption:
\begin{assumption}\label{ass_1}
For any derivative payout $V(x_T,\epsilon_T)$, we can construct a hedged portfolio, and can proceed with the derivation of the Quantum Black Scholes equation following the basic methodology from \cite{AccBoukas}.
\end{assumption}
We now have separate creation, annihilation and Poisson operators, for $x$ and $\epsilon$; $dA_x$, $dA_{\epsilon}$ etc. These can be combined using the multiplication table (\cite{Hudson-Parthasarathy}, Theorem 4.5), by making the assumption that the bid-offer is uncorrelated with the equity price. This corresponds to assumption 
\ref{ass_2}:
\begin{assumption}\label{ass_2}
$dA_xd\Lambda_{\epsilon}=dA_{\epsilon}d\Lambda_x=d\Lambda_xd\Lambda_{\epsilon}=d\Lambda_{\epsilon}d\Lambda_x=dA_xdA^{\dagger}_{\epsilon}=dA_{\epsilon}dA^{\dagger}_x=d\Lambda_xdA^{\dagger}_{\epsilon}=d\Lambda_{\epsilon}dA^{\dagger}_x=0$.
\end{assumption}
Lastly, we make the assumption that we can expand the derivative payout as before:
\begin{assumption}\label{ass_3}
$V_t=F(t,x,\epsilon)=\sum_{n,k,l} a_{n,l,k}(t-t_0)^n(x-x_0)^k(\epsilon-\epsilon_0)^l$
\end{assumption}
We can now derive the relevant Quantum Black-Scholes equation:
\begin{proposition}
Let $\mathbb{H}=L^2[\mathbb{R}^2]$, and let $X\otimes 1$ and $\epsilon\otimes 1$ operate on the market space: $\mathbb{H}\otimes\Gamma (L^2[\mathbb{R}^+;\mathbb{H}])$, to return the mid-price, and bid-offer spread for a tradeable security respectively. Further, let the notation from \cite{AccBoukas}, and the above assumptions apply.\newline
\break
Then the Quantum Black-Scholes equation in this case is given by:\newline
\begin{equation}\label{QBS2}
\begin{split}
a_{1,0,0}(t,j_t(X),j_t(\epsilon))+a_{0,1,0}(t,j_t(X),j_t(\epsilon))j_t(\theta_x)+a_{0,0,1}(t,j_t(X),j_t(\epsilon))j_t(\theta_{\epsilon})\\
+\sum_{k=2}^{\infty} a_{0,k,0}(t,j_t(X),j_t(\epsilon))j_t(\alpha_x\lambda_x^{k-2}\alpha_x^{\dagger})+\sum_{l=2}^{\infty} a_{0,0,l}(t,j_t(X),j_t(\epsilon))j_t(\alpha_{\epsilon}\lambda_{\epsilon}^{l-2}\alpha_{\epsilon}^{\dagger})\\
=a_{x,t}j_t(\theta_x)+a_{\epsilon,t}j_t(\theta_{\epsilon})+V_tr-a_{x,t}j_t(X)r-a_{\epsilon,t}j_t(\epsilon)r
\end{split}
\end{equation}
Where for $j_t(X)$:
\begin{equation}\label{bc_x}
\begin{split}
\sum_{k=1}^{\infty} a_{0,k,0}(t,j_t(X),j_t(\epsilon))j_t(\lambda_x^{k-1}\alpha_x^{\dagger})=a_{x,t}j_t(\alpha_x^{\dagger})\\
\sum_{k=1}^{\infty} a_{0,k,0}(t,j_t(X),j_t(\epsilon))j_t(\alpha_x\lambda_x^{k-1})=a_{x,t}j_t(\alpha_x)\\
\sum_{k=1}^{\infty} a_{0,k,0}(t,j_t(X),j_t(\epsilon))j_t(\lambda_x^{k})=a_{x,t}j_t(\lambda_x)\\
\end{split}
\end{equation}
and for $j_t(\epsilon)$:\newline
\begin{equation}\label{bc_e}
\begin{split}
\sum_{l=1}^{\infty} a_{0,0,l}(t,j_t(X),j_t(\epsilon))j_t(\lambda_{\epsilon}^{l-1}\alpha_{\epsilon}^{\dagger})=a_{\epsilon,t}j_t(\alpha_{\epsilon}^{\dagger})\\
\sum_{l=1}^{\infty} a_{0,0,l}(t,j_t(X),j_t(\epsilon))j_t(\alpha_{\epsilon}\lambda_{\epsilon}^{l-1})=a_{\epsilon,t}j_t(\alpha_{\epsilon})\\
\sum_{l=1}^{\infty} a_{0,0,l}(t,j_t(X),j_t(\epsilon))j_t(\lambda_{\epsilon}^{l})=a_{\epsilon,t}j_t(\lambda_{\epsilon})
\end{split}
\end{equation}
\end{proposition}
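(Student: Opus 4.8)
The plan is to carry over the Accardi-Boukas derivation of Proposition~1 (equation~(\ref{QBS})) step by step, now with the two non-interacting families of Hudson-Parthasarathy noises introduced above: $dA_x,dA_x^{\dagger},d\Lambda_x$ attached to the mid-price observable $X$, and $dA_{\epsilon},dA_{\epsilon}^{\dagger},d\Lambda_{\epsilon}$ attached to the spread observable $\epsilon$, whose cross products all vanish by Assumption~\ref{ass_2}. First I would record the two quantum stochastic flow equations for $j_t(X)$ and $j_t(\epsilon)$ obtained from $dU_t$, whose $dt$, creation, annihilation and gauge coefficients are the structure maps $\theta_x,\alpha_x,\lambda_x$ and $\theta_{\epsilon},\alpha_{\epsilon},\lambda_{\epsilon}$ of the statement. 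Since $X\otimes 1$ and $\epsilon\otimes 1$ commute on $\mathbb{H}\otimes\Gamma(L^2[\mathbb{R}^+;\mathbb{H}])$, so do $j_t(X)$ and $j_t(\epsilon)$, which is what makes the operator power series of Assumption~\ref{ass_3} for $V_t=F(t,j_t(X),j_t(\epsilon))$ meaningful.

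Second, I would compute $dV_t$ by applying the quantum It\^o rules to this series term by term. The decisive simplification is Assumption~\ref{ass_2}: in $d\bigl[(j_t(X)-x_0)^k(j_t(\epsilon)-\epsilon_0)^l\bigr]$ the mixed second-order piece $(dj_t(X))(dj_t(\epsilon))$ is a combination of $x$-noise times $\epsilon$-noise increments, hence vanishes, so that the coefficients of $dA_x^{\dagger},dA_x,d\Lambda_x$ in $dV_t$ only involve the pure-$x$ coefficients $a_{0,k,0}$ while the coefficients of $dA_{\epsilon}^{\dagger},dA_{\epsilon},d\Lambda_{\epsilon}$ only involve the $a_{0,0,l}$, and no genuinely mixed differential survives. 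As in \cite{AccBoukas}, the one delicate step is the Poisson contribution: since $d\Lambda_x d\Lambda_x=d\Lambda_x$, $d\Lambda_x\,dA_x^{\dagger}=dA_x^{\dagger}$ and $dA_x\,d\Lambda_x=dA_x$, every power $(j_t(X)-x_0)^k$ continues to feed the $d\Lambda_x$, $dA_x^{\dagger}$ and $dA_x$ slots, and the resulting resummation over $k$ --- carried out exactly as in \cite{AccBoukas} Proposition~1 --- produces the infinite series $\sum_{k\ge 2}a_{0,k,0}\,j_t(\alpha_x\lambda_x^{k-2}\alpha_x^{\dagger})$, $\sum_{k\ge1}a_{0,k,0}\,j_t(\lambda_x^{k-1}\alpha_x^{\dagger})$, $\sum_{k\ge1}a_{0,k,0}\,j_t(\alpha_x\lambda_x^{k-1})$ and $\sum_{k\ge1}a_{0,k,0}\,j_t(\lambda_x^{k})$ that appear in (\ref{QBS2}) and the boundary conditions (\ref{bc_x}); the $\epsilon$-variable computation is identical.

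Third, I would run the hedging argument permitted by Assumption~\ref{ass_1}: form the replicating portfolio holding $a_{x,t}$ units of the mid-price asset, $a_{\epsilon,t}$ units of the spread asset, and the remainder in the risk-free numeraire, so that the self-financing condition reads
\[
dV_t=a_{x,t}\,dj_t(X)+a_{\epsilon,t}\,dj_t(\epsilon)+r\bigl(V_t-a_{x,t}\,j_t(X)-a_{\epsilon,t}\,j_t(\epsilon)\bigr)\,dt,
\]
and then equate coefficients of the seven differentials. Matching the $dA_x^{\dagger},dA_x,d\Lambda_x$ coefficients against the corresponding coefficients of $a_{x,t}\,dj_t(X)$ --- namely $a_{x,t}j_t(\alpha_x^{\dagger})$, $a_{x,t}j_t(\alpha_x)$ and $a_{x,t}j_t(\lambda_x)$ --- gives the three identities (\ref{bc_x}); matching the $dA_{\epsilon}^{\dagger},dA_{\epsilon},d\Lambda_{\epsilon}$ coefficients gives (\ref{bc_e}); and matching the $dt$ coefficients, after writing $a_{1,0,0}=\partial F/\partial t$ evaluated along the flow, is exactly (\ref{QBS2}).

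The main obstacle I anticipate is not any individual calculation but making the two-variable bookkeeping watertight: one must check that Assumption~\ref{ass_2} really does annihilate \emph{every} cross term that can occur in $dV_t$ --- including the less obvious ones, where an $\epsilon$-gauge increment $d\Lambda_{\epsilon}$ would otherwise absorb an $x$-creation increment $dA_x^{\dagger}$, or an $x$-annihilation increment $dA_x$ would absorb an $\epsilon$-gauge increment, and so on --- so that the noise part of $dV_t$ splits cleanly into a pure-$x$ block and a pure-$\epsilon$ block; this clean split is precisely what forces the boundary conditions to decouple into the two independent systems (\ref{bc_x}) and (\ref{bc_e}). A secondary technical point, handled as tacitly here as in \cite{AccBoukas}, is justifying the termwise differentiation and resummation of the operator power series on a suitable domain of analytic vectors. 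Once the block structure is in place, each block is the one-dimensional Accardi-Boukas argument verbatim with the other variable carried along as an inert parameter, and collecting the $dt$ coefficients to obtain (\ref{QBS2}) is routine.
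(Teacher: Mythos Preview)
Your proposal is correct and follows essentially the same route as the paper: compute $dV_t$ from the two-variable power series via the Hudson--Parthasarathy rules (using Assumption~\ref{ass_2} to kill all $x$--$\epsilon$ cross terms), write the self-financing differential of the hedge portfolio $a_{x,t}j_t(X)+a_{\epsilon,t}j_t(\epsilon)+b_t\beta$, and then match the $dA_x^{\dagger},dA_x,d\Lambda_x$, $dA_{\epsilon}^{\dagger},dA_{\epsilon},d\Lambda_{\epsilon}$ and $dt$ coefficients to obtain (\ref{bc_x}), (\ref{bc_e}) and (\ref{QBS2}). If anything, your account is more explicit than the paper's about why the noise part of $dV_t$ decouples into pure-$x$ and pure-$\epsilon$ blocks and about the analytic-vector caveat underlying the termwise manipulations.
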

\begin{proof}
First, the equations for time-development operators for $X\otimes 1$, and $\epsilon\otimes 1$ become:\newline
\break
$dU_{x,t}=-\Bigg(\bigg(iH+\frac{1}{2}L_x^*L_x\bigg)dt+L_x^*SdA_x-L_xdA_x^{\dagger}+\bigg(1-S\bigg)d\Lambda_x$\newline
$dU_{\epsilon,t}=-\Bigg(\bigg(iH+\frac{1}{2}L_{\epsilon}^*L_{\epsilon}\bigg)dt+L_{\epsilon}^*SdA_{\epsilon}-L_{\epsilon}dA_{\epsilon}^{\dagger}+\bigg(1-S\bigg)d\Lambda_{\epsilon}$\newline
\break
Then, applying the Hudson-Parthasarathy multiplication rules to the expansion given in assumption \ref{ass_3} gives:
\begin{equation}\label{expansion}
\begin{split}
dV_t=\bigg(a_{1,0,0}(t,j_t(x),j_t(\epsilon))+a_{0,1,0}(t,j_t(x),j_t(\epsilon))j_t(\theta_x)+a_{0,0,1}(t,j_t(x),j_t(\epsilon))j_t(\theta_{\epsilon})\\
+\sum_{k=2}^{\infty} a_{0,k,0}(t,j_t(X),j_t(\epsilon))j_t(\alpha_x\lambda_x^{k-2}\alpha_x^{\dagger})+\sum_{l=2}^{\infty} a_{0,0,l}(t,j_t(X),j_t(\epsilon))j_t(\alpha_{\epsilon}\lambda_{\epsilon}^{l-2}\alpha_{\epsilon}^{\dagger})\bigg)dt\\
+\bigg(a_{0,1,0}(t,j_t(X),j_t(\epsilon))j_t(\alpha_x)+\sum_{k=2}^{\infty} a_{0,k,0}(t,j_t(X),j_t(\epsilon))j_t(\alpha_x\lambda_x^{k-1})\bigg)dA_x\\
+\bigg(a_{0,0,1}(t,j_t(X),j_t(\epsilon))j_t(\alpha_{\epsilon})+\sum_{l=2}^{\infty} a_{0,0,l}(t,j_t(X),j_t(\epsilon))j_t(\alpha_{\epsilon}\lambda_{\epsilon}^{k-1})\bigg)dA_{\epsilon}\\
+\bigg(a_{0,1,0}(t,j_t(X),j_t(\epsilon))j_t(\alpha_x^{\dagger})+\sum_{k=2}^{\infty} a_{0,k,0}(t,j_t(X),j_t(\epsilon))j_t(\lambda_x^{k-1}\alpha_x^{\dagger})\bigg)dA_x^{\dagger}\\
+\bigg(a_{0,0,1}(t,j_t(X),j_t(\epsilon))j_t(\alpha_{\epsilon}^{\dagger})+\sum_{l=2}^{\infty} a_{0,0,l}(t,j_t(X),j_t(\epsilon))j_t(\lambda_{\epsilon}^{k-1}\alpha_{\epsilon}^{\dagger})\bigg)dA_{\epsilon}^{\dagger}
\end{split}
\end{equation}
Where $\theta_x, \theta_{\epsilon}$ are given by:\newline
\break
$\theta_x=i[H,X]-\frac{1}{2}\bigg(L_x^*L_xX+XL_x^*L_x-2L_x^*XL_x\bigg)$\newline
$\theta_{\epsilon}=i[H,\epsilon]-\frac{1}{2}\bigg(L_{\epsilon}^*L_{\epsilon}\epsilon+\epsilon L_{\epsilon}^*L_{\epsilon}-2L_{\epsilon}^*\epsilon L_{\epsilon}\bigg)$\newline
\break
$\alpha_x, \alpha_{\epsilon}$ are given by:\newline
\break
$\alpha_x=[L_x^*,X]S$\newline
$\alpha_{\epsilon}=[L_{\epsilon}^*,\epsilon]S$\newline
\break
and finally $\lambda_x, \lambda_{\epsilon}$ are given by:\newline
\break
$\lambda_x=S^*XS-X$\newline
$\lambda_{\epsilon}=S^*\epsilon S-\epsilon$\newline
\break
By assumption \ref{ass_1} we can form a hedge portfolio which we now use:\newline
\break
$V_t=a_{x,t}j_t(X)+a_{\epsilon,t}j_t(\epsilon)+b_t\beta$, for risk free numeraire asset $\beta$.\newline
\break
$dV_t=a_{x,t}dj_t(X)+a_{\epsilon,t}dj_t(\epsilon)+b_t\beta rdt$\newline
\break
Applying the unitary time development operators for $\epsilon$ and $x$ we have:\newline
\break
\begin{equation}\label{hedges}
\begin{split}
dV_t=a_{x,t}\big(j_t(\alpha_x^{\dagger})dA_x^{\dagger}+j_t(\lambda_x)d\Lambda_x+j_t(\alpha_x)dA_x\big)\\
+a_{\epsilon,t}\big(j_t(\alpha_{\epsilon}^{\dagger})dA_{\epsilon}^{\dagger}+j_t(\lambda_{\epsilon})d\Lambda_{\epsilon}+j_t(\alpha_{\epsilon})dA_{\epsilon}\big)\\
+\big(j_t(\theta_x)+(V_t-a_{x,t}j_t(X)-a_{\epsilon,t}j_t(\epsilon))r\big)dt
\end{split}
\end{equation}
\break
Equating the risky terms between equations (\ref{expansion}), and (\ref{hedges}) leads to the boundary conditions, (\ref{bc_x}) and (\ref{bc_e}) on $a_{x,t}$ and $a_{\epsilon,t}$. Similarly, equating the $dt$ terms, leads to the Quantum Black-Scholes equation for this system: equation (\ref{QBS2}).
\end{proof}
Now, let $f\big(x,\epsilon\big)$ represent a vector in $\mathbb{H}$, and apply a rotation matrix:\newline
$S=\begin{bmatrix} cos(\phi) && -sin(\phi) \\ sin(\phi) && cos(\phi) \end{bmatrix}$\newline
\break
We have:\newline
\break
$Sf\big(x,\epsilon\big)=f\big(cos(\phi)x-sin(\phi)\epsilon,cos(\phi)\epsilon+sin(\phi)x\big)$\newline
\break
$XSf=xf\big(cos(\phi)x-sin(\phi)\epsilon,cos(\phi)\epsilon+sin(\phi)x\big)$\newline
\break
$S^*XSf=\big(cos(\phi)x+sin(\phi)\epsilon\big)f(x,\epsilon)$\newline
\break
So, we end up with:\newline
\break
$\lambda_x=\bigg(\big(cos(\phi)-1\big)x+sin(\phi)\epsilon\bigg)$, $\lambda_{\epsilon}=\bigg(\big(cos(\phi)-1\big)\epsilon-sin(\phi)x\bigg)$.\newline
\break
Finally, inserting this back into equation (\ref{QBS2}), we get the Black-Scholes equation for the system (following notation from \cite{AccBoukas}):
\begin{proposition}
Let $u(t,x,\epsilon)$ represent the price at time t, of a derivative contract in the system described above under rotation $\phi$, and with interest rate $r$. Then the quantum Black-Scholes equation becomes: 
\begin{equation}\label{eq:2}
\begin{split}
\frac{\partial u(t,x,\epsilon)}{\partial t}=rx\frac{\partial u(t,x,\epsilon)}{\partial x}+r\epsilon\frac{\partial u(t,x,\epsilon)}{\partial \epsilon}-u(t,x,\epsilon)r\\
+\sum_{k=2}^{\infty}\frac{((cos(\phi)-1)x+sin(\phi)\epsilon)^{k-2}}{k!}\frac{\partial^k u(t,x,\epsilon)}{\partial x^k}g_1(x,\epsilon)\\
+\sum_{l=2}^{\infty}\frac{((cos(\phi)-1)\epsilon-sin(\phi)x)^{l-2}}{l!}\frac{\partial^l u(t,x,\epsilon)}{\partial \epsilon^l}g_2(x,\epsilon)
\end{split}
\end{equation}
\end{proposition}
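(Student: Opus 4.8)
The plan is to follow the derivation of the two-variable Quantum Black-Scholes equation~(\ref{QBS2}) together with the argument of Accardi \& Boukas in \cite{AccBoukas} Proposition 3, now feeding in the rotation data computed just above the statement and converting the operator identity~(\ref{QBS2}) into a scalar partial differential equation for $u(t,x,\epsilon)$ via the Taylor ansatz of Assumption~\ref{ass_3}.

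First I would record the inputs. Under the rotation $S$ we have the multiplication operators $\lambda_x = \ell_x(x,\epsilon) := (\cos(\phi)-1)x + \sin(\phi)\epsilon$ and $\lambda_{\epsilon} = \ell_{\epsilon}(x,\epsilon) := (\cos(\phi)-1)\epsilon - \sin(\phi)x$ on $\mathbb{H} = L^2[\mathbb{R}^2]$, together with $\alpha_x = [L_x^*,X]S$, $\alpha_{\epsilon} = [L_{\epsilon}^*,\epsilon]S$ and $\theta_x,\theta_{\epsilon}$ exactly as in the proof of~(\ref{QBS2}). Next, Assumption~\ref{ass_3} identifies the expansion coefficients with partial derivatives: $a_{1,0,0}\leftrightarrow \partial u/\partial t$, $a_{0,1,0}\leftrightarrow \partial u/\partial x$, $a_{0,0,1}\leftrightarrow \partial u/\partial \epsilon$, $a_{0,k,0}\leftrightarrow (1/k!)\,\partial^k u/\partial x^k$ and $a_{0,0,l}\leftrightarrow (1/l!)\,\partial^l u/\partial \epsilon^l$. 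Then I would solve the boundary conditions~(\ref{bc_x})--(\ref{bc_e}) for $a_{x,t}$ and $a_{\epsilon,t}$: since $j_t$ is a $*$-homomorphism and $\lambda_x,\lambda_{\epsilon}$ act as multiplications, the third lines of~(\ref{bc_x}) and~(\ref{bc_e}) give $a_{x,t} = \ell_x^{-1}\big(u(x+\ell_x,\epsilon) - u(x,\epsilon)\big)$ and $a_{\epsilon,t} = \ell_{\epsilon}^{-1}\big(u(x,\epsilon+\ell_{\epsilon}) - u(x,\epsilon)\big)$, which for small $\phi$ reduce to $\partial u/\partial x + O(\phi)$ and $\partial u/\partial \epsilon + O(\phi)$; one then checks that the first two lines of~(\ref{bc_x})--(\ref{bc_e}) are consistent with these to the same order.

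With $a_{x,t},a_{\epsilon,t}$ eliminated I would substitute into~(\ref{QBS2}). Because $a_{0,1,0}$ agrees with $a_{x,t}$ (and $a_{0,0,1}$ with $a_{\epsilon,t}$) to leading order, the $j_t(\theta_x)$ and $j_t(\theta_{\epsilon})$ contributions on the two sides cancel; writing $j_t(X),j_t(\epsilon)$ as multiplication by $x,\epsilon$ then produces the drift and discounting terms $rx\,\partial u/\partial x + r\epsilon\,\partial u/\partial \epsilon - ur$. Finally, collapsing each operator product $j_t(\alpha_x\lambda_x^{k-2}\alpha_x^{\dagger})$ to $\ell_x^{k-2}\,g_1(x,\epsilon)$ and $j_t(\alpha_{\epsilon}\lambda_{\epsilon}^{l-2}\alpha_{\epsilon}^{\dagger})$ to $\ell_{\epsilon}^{l-2}\,g_2(x,\epsilon)$ --- exactly as the one-dimensional Lemma collapsed $j_t(\alpha\alpha^{\dagger})$ to $g(x)$ in passing from~(\ref{QBS}) to~(\ref{eq:1}) --- and matching the remaining signs against the classical limit as in \cite{AccBoukas}, the two series rearrange into~(\ref{eq:2}).

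The main obstacle is precisely this last collapse. In the translation case $\lambda = \varepsilon$ is a scalar and commutes with everything, so $j_t(\alpha\lambda^{k-2}\alpha^{\dagger}) = \varepsilon^{k-2} j_t(\alpha\alpha^{\dagger})$ is exact; here $\lambda_x$ is a genuine multiplication operator whereas $\alpha_x$ carries the rotation $S$, which does not commute with multiplications, so $j_t(\alpha_x\lambda_x^{k-2}\alpha_x^{\dagger}) = \ell_x^{k-2} g_1(x,\epsilon)$ holds only to leading order in $\phi$ (equivalently, $g_1,g_2$ must be read as effective coefficients that absorb the commutator corrections), and the same caveat underlies the replacement $a_{x,t}\to \partial u/\partial x$. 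Making~(\ref{eq:2}) fully rigorous therefore requires either restricting to the near-classical regime $\phi \ll 1$ --- the stated scope of the result --- or imposing explicit hypotheses on $H, L_x, L_{\epsilon}$ under which the relevant operators act diagonally; the convergence of the resulting Kramers-Moyal-type series in~(\ref{eq:2}) would be handled in the same manner as for~(\ref{eq:1}).
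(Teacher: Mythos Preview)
Your approach broadly parallels the paper's, but there is a concrete difference in how the result is made to hold. You treat~(\ref{eq:2}) as a leading-order-in-$\phi$ statement: you approximate $a_{x,t}\approx\partial u/\partial x$, you collapse $j_t(\alpha_x\lambda_x^{k-2}\alpha_x^{\dagger})$ to $\ell_x^{k-2}g_1$ only up to commutator corrections, and you take $g_1$ to be essentially $j_t(\alpha_x\alpha_x^{\dagger})$ by analogy with the one-dimensional lemma. The paper instead makes the identity exact by two moves you only gesture at in your final paragraph. First, it imposes the explicit hypothesis that $L_x,L_x^*,L_{\epsilon},L_{\epsilon}^*$ act as multiplication by polynomials in $x,\epsilon$, hence commute with $\lambda_x,\lambda_{\epsilon}$; under this assumption the boundary conditions~(\ref{bc_x}) factor through $j_t(\alpha_x^{\dagger})$ exactly, giving $a_{x,t}=\sum_{k\geq 1}a_{0,k,0}\,j_t(\lambda_x^{k-1})$ with no $O(\phi)$ error, so the $j_t(\theta_x)$ contributions on the two sides of~(\ref{QBS2}) match term-by-term for every $k$, not just at leading order. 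Second, the residual terms produced by this substitution are absorbed not approximately but by \emph{definition}: the paper sets $g_1(x,\epsilon)=j_t\big(\alpha_x\alpha_x^{*}-\lambda_x(\theta_x-xr)\big)$ and $g_2(x,\epsilon)=j_t\big(\alpha_{\epsilon}\alpha_{\epsilon}^{*}-\lambda_{\epsilon}(\theta_{\epsilon}-\epsilon r)\big)$, so that the $k\geq 2$ bracket becomes exactly $j_t(\lambda_x^{k-2})\,g_1$.

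Thus the ``main obstacle'' you identify is genuine if $g_1$ is taken to be $j_t(\alpha_x\alpha_x^{\dagger})$ alone, but it vanishes once one adopts the commutativity hypothesis and the paper's enlarged definition of $g_1,g_2$. The paper's route is algebraically tighter and delivers~(\ref{eq:2}) as an exact identity valid for all $\phi$; the small-rotation approximation is applied only afterward, in passing from~(\ref{eq:2}) to~(\ref{eq:3}). Your route is more cautious about the operator algebra but, as written, proves only an approximate version of a proposition that is stated exactly.
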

\begin{proof}
We assume that the operators $L_x, L_x^*,L_{\epsilon}, L_{\epsilon}^*$ involve multiplication by a polynomial in $x, \epsilon$, and therefore commute with $\lambda_x, \lambda_{\epsilon}$. Therefore, from the boundary conditions we have:\newline
\break
$\sum_{k=1}^{\infty} a_{0,k,0}(t,j_t(X),j_t(\epsilon))j_t(\lambda_x^{k-1})=a_{x,t}$\newline
\break
$\sum_{l=1}^{\infty} a_{0,0,l}(t,j_t(X),j_t(\epsilon))j_t(\lambda_{\epsilon}^{l-1})=a_{\epsilon,t}$\newline
\break
Inserting this into \ref{QBS2} gives:
\begin{equation}
\begin{split}
a_{1,0,0}(t,j_t(X),j_t(\epsilon))+a_{0,1,0}(t,j_t(X),j_t(\epsilon))j_t(X)r+a_{0,0,1}(t,j_t(X),j_t(\epsilon))j_t(\epsilon)r\\
+\sum_{k=2}^{\infty} a_{0,k,0}(t,j_t(X),j_t(\epsilon))j_t(\lambda_x^{k-2}(\alpha_x\alpha_x^*-\lambda_x(\theta_x-xr)))\\
+\sum_{l=2}^{\infty} a_{0,0,l}(t,j_t(X),j_t(\epsilon))j_t(\lambda_{\epsilon}^{l-2}(\alpha_{\epsilon}\alpha_{\epsilon}^*-\lambda_{\epsilon}(\theta_{\epsilon}-\epsilon r)))\\
=V_tr
\end{split}
\end{equation}
Now writing $g_1(x,\epsilon)=j_t(\alpha_x\alpha_x^*-\lambda_x(\theta_x-xr))$, $g_2(x,\epsilon)=j_t(\alpha_{\epsilon}\alpha_{\epsilon}^*-\lambda_{\epsilon}(\theta_{\epsilon}-\epsilon r))$, and $a_{0,k,0}(t,j_t(X),j_t(\epsilon))=\frac{1}{k!}\frac{\partial^k u}{\partial x^k}, a_{0,0,l}(t,j_t(X),j_t(\epsilon))=\frac{1}{l!}\frac{\partial^l u}{\partial {\epsilon}^l}$, we have the result given.
\end{proof}
For small rotations, we have $cos(\phi)=1-\frac{\varepsilon^2}{2}+o(\varepsilon^2)$, and $sin(\phi)=\varepsilon + o(\varepsilon^2)$. Inserting this into equation (\ref{eq:2}), we have a new partial differential equation, where the coefficient of the $k$th partial derivative, for $k\geq 3$, with respect to $x, \epsilon$, is correct to $o(\varepsilon^{2(k-2)})$. This form for small rotations is more amenable to the methods we apply in section 3.
\begin{equation}\label{eq:3}
\begin{split}
\frac{\partial u(t,x,\epsilon)}{\partial t}=rx\frac{\partial u(t,x,\epsilon)}{\partial x}+r\epsilon\frac{\partial u(t,x,\epsilon)}{\partial \epsilon}-u(t,x,\epsilon)r\\
+\sum_{k=2}^{\infty}\frac{(\varepsilon\epsilon-(\varepsilon^2/2)x)^{k-2}}{k!}\frac{\partial^k u(t,x,\epsilon)}{\partial x^k}g_1(x,\epsilon)\\
+\sum_{l=2}^{\infty}\frac{(-\varepsilon x-(\varepsilon^2/2)\epsilon)^{l-2}}{l!}\frac{\partial^l u(t,x,\epsilon)}{\partial \epsilon^l}g_2(x,\epsilon)
\end{split}
\end{equation}
As is the case for equation (\ref{eq:1}), this reduces to the classical Black-Scholes for 2 uncorrelated random variables (in this case price: $x$, and bid-offer spread: $\epsilon$) when $\varepsilon=0$.\newline
\break
For the classical case, the addition of the bid-offer spread is in some ways unnecessary when using the model for derivative pricing. For derivative contracts depending on the close price, one can usually hedge daily at the closing price during the end of day auction process. For many trading desks this may be sufficient in practice, and terms involving the bid-offer spread will drop out of the model. In the quantum case, examination of equations (\ref{eq:2}) and (\ref{eq:3}) shows that we expect interference between the bid-offer spread dynamics and the price dynamics. For small rotations, these equations are singular PDEs, and we expect the behaviour in most regions to approximate classical behaviour. However, when the higher derivative terms are larger, quantum interference may be significant. We discuss this more in sections 3 and 4.
\section{Nonlocal Diffusions}
In this section, we derive the Fokker-Planck equations associated to the Quantum Black-Scholes equations: (\ref{eq:1}), and (\ref{eq:3}). We show how these can be written in integral form, by using the Kramers-Moyal expansion (see for example \cite{Frank}). This enables us to link the Quantum Black-Scholes models of the previous section to nonlocal diffusions (see for example the paper by Luczka, H{\"a}nggi and Gadomski: \cite{Luczka}). We assume zero interest rates in this section to help clarify the notation without changing the key dynamics. The integral form for the Fokker-Planck equations is given by:
\begin{equation}\label{FokkerPlanck1}
\begin{split}
\frac{\partial p(t,x,\epsilon)}{\partial t}=\frac{1}{2}\frac{\partial^2}{\partial x^2}\bigg(\int_{-\infty}^{\infty}\int_{-\infty}^{\infty}\big(H(y_x,y_{\epsilon}|x,\epsilon)g_1(x,\epsilon)p(x-y_x,\epsilon-y_{\epsilon},t)\big)dy_xdy_{\epsilon}\bigg)\\
+\frac{1}{2}\frac{\partial^2}{\partial {\epsilon}^2}\bigg(\int_{-\infty}^{\infty}\int_{-\infty}^{\infty}\big(H(y_x,y_{\epsilon}|x,\epsilon)g_2(x,\epsilon)p(x-y_x,\epsilon-y_{\epsilon},t)\big)dy_xdy_{\epsilon}\bigg)
\end{split}
\end{equation}
The function $H(y_x,y_{\epsilon}|x,\epsilon)$ has the effect of "blurring" the impact of the diffusion operator. In the case that $H(y_x,y_{\epsilon}|x,\epsilon)$ is a Dirac delta function, the diffusion operator is localised as usual, and the associated Fokker-Planck equation reduces to the standard Kolmogorov forward equation associated with the classical Black-Scholes.\newline
\break
We start with the following general form for equations (\ref{eq:1}) and (\ref{eq:3}):
\begin{equation}\label{eq:4}
\frac{\partial u(t,x,\epsilon)}{\partial t}=g_1(x,\epsilon)\sum_{k=2}^{\infty}\frac{f_1(x,\epsilon,\varepsilon)^{k-2}}{k!}\frac{\partial^k u(t,x,\epsilon)}{\partial x^k}+g_2(x,\epsilon)\sum_{l=2}^{\infty}\frac{f_2(x,\epsilon,\varepsilon)^{l-2}}{l!}\frac{\partial^l u(t,x,\epsilon)}{\partial \epsilon^l}
\end{equation}
\begin{proposition}
The Fokker-Planck equation associated to equation (\ref{eq:4}), with $r=0$ is given by:
\begin{equation}\label{FokkerPlanck2}
\begin{split}
\frac{\partial p(t,x,\epsilon)}{\partial t}=\sum_{k=2}^{\infty}\frac{(-1)^k}{k!}\frac{\partial^k \big(g_1(x,\epsilon)f_1(x,\epsilon,\varepsilon)^{k-2}p(t,x,\epsilon)\big)}{\partial x^k}\\
+\sum_{l=2}^{\infty}\frac{(-1)^l}{l!}\frac{\partial^l \big(g_2(x,\epsilon)f_2(x,\epsilon,\varepsilon)^{l-2}p(t,x,\epsilon)\big)}{\partial \epsilon^l}
\end{split}
\end{equation}
\end{proposition}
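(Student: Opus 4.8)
The plan is to read equation (\ref{eq:4}) (with $r=0$) as a Kolmogorov backward equation, $\partial_t u = \mathcal{A}u$, for the spatial operator
\[
\mathcal{A} \;=\; g_1(x,\epsilon)\sum_{k=2}^{\infty}\frac{f_1(x,\epsilon,\varepsilon)^{k-2}}{k!}\,\frac{\partial^k}{\partial x^k}\;+\;g_2(x,\epsilon)\sum_{l=2}^{\infty}\frac{f_2(x,\epsilon,\varepsilon)^{l-2}}{l!}\,\frac{\partial^l}{\partial \epsilon^l},
\]
and to obtain the associated Fokker--Planck equation as the dual evolution $\partial_t p = \mathcal{A}^{\dagger}p$, where $\mathcal{A}^{\dagger}$ is the formal $L^2(\mathbb{R}^2)$ adjoint of $\mathcal{A}$. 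Equivalently, in the Kramers--Moyal language of \cite{Frank}: the only nonzero Kramers--Moyal coefficients of the underlying process are $D^{(k)}_x = g_1 f_1^{k-2}/k!$ in the $x$-direction and $D^{(l)}_\epsilon = g_2 f_2^{l-2}/l!$ in the $\epsilon$-direction for $k,l\ge 2$, while all \emph{mixed} moments vanish --- the reflection of Assumption \ref{ass_2}, the $x$- and $\epsilon$-noises being uncorrelated, which is precisely why no mixed $\partial_x^m\partial_\epsilon^n$ terms ever occur --- and (\ref{FokkerPlanck2}) is then just the forward Kramers--Moyal expansion $\partial_t p = \sum_{k}(-\partial_x)^k(D^{(k)}_x p) + \sum_{l}(-\partial_\epsilon)^l(D^{(l)}_\epsilon p)$ with these coefficients, the $k=l=1$ drift terms being absent because $r=0$.

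Carrying out the adjoint explicitly, I would pair (\ref{eq:4}) with $p$ and integrate by parts one term at a time. The only computation needed is, for each $k\ge 2$ with $\epsilon$ held as a parameter,
\[
\int_{\mathbb{R}}\Big(g_1\,\tfrac{f_1^{k-2}}{k!}\,\partial_x^k u\Big)\,p\;dx \;=\; \frac{(-1)^{k}}{k!}\int_{\mathbb{R}} u\,\partial_x^{k}\!\big(g_1 f_1^{k-2} p\big)\,dx,
\]
obtained by transferring the $k$ derivatives from $u$ onto the factor $g_1 f_1^{k-2}p$, all boundary terms vanishing under the standing assumption that $p$ decays rapidly while $g_1, f_1$ (being polynomials in $x,\epsilon$ under the hypothesis that $L_x,L_\epsilon$ act by multiplication by polynomials) grow only polynomially. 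The analogous identity in $\epsilon$ handles the second series. Summing over $k$ (resp.\ $l$) identifies $\mathcal{A}^{\dagger}p$ with the right-hand side of (\ref{FokkerPlanck2}); since this holds against arbitrary $u$, the density evolves by (\ref{FokkerPlanck2}). As a sanity check, the $k=2$ and $l=2$ terms reproduce the usual $\tfrac12\partial_x^2(g_1 p)$ and $\tfrac12\partial_\epsilon^2(g_2 p)$ diffusion terms of the classical Kolmogorov forward equation, and the $k,l\ge 3$ terms are the nonlocal corrections which, after resummation and a Taylor expansion of $p(x-y_x,\epsilon-y_\epsilon,t)$ about $(x,\epsilon)$, produce the convolution against the blurring kernel $H$ in (\ref{FokkerPlanck1}).

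The algebra here is routine; the real content is analytic. I would state the result at the same formal level as the derivations in Section 2 and as \cite{Frank}: the interchange of the infinite summation with integration and with the $k$-fold integration by parts, the vanishing of all boundary contributions, and indeed the convergence of the Kramers--Moyal series themselves, must all be taken on trust (the paper has already flagged (\ref{eq:2})--(\ref{eq:3}) as singular PDEs). A fully rigorous treatment would require controlling the growth of $f_1^{k-2}, f_2^{l-2}$ in $k,l$ and pinning down a function class for $p$ in which the series converges and the boundary terms are zero; this is the main obstacle, and it lies outside the scope of what is needed here.
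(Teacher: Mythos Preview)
Your proposal is correct and follows essentially the same route as the paper: both identify the spatial operator in (\ref{eq:4}) as a Kolmogorov backward generator and obtain (\ref{FokkerPlanck2}) as its formal $L^2$ adjoint via term-by-term integration by parts, with the paper phrasing this as ``truncate at order $N$, integrate by parts $N$ times, then let $N\to\infty$''. Your write-up is in fact more explicit than the paper's about the Kramers--Moyal interpretation, the role of Assumption \ref{ass_2} in killing mixed derivatives, and the analytic caveats (convergence of the series, vanishing of boundary terms), all of which the paper leaves implicit.
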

\begin{proof}
For a derivative payout $h(x,\epsilon)$, with zero interest rates, we have the following price in risk neutral measure $Q$:\newline
\break
$u(x_t,\epsilon_t, t)=E^{Q}\big[h(x_T,\epsilon_T)\big]=\int_{\mathbb{R}^2} h(y_x,y_{\epsilon})p(y_x,y_{\epsilon}|x,\epsilon,t)dy_xdy_{\epsilon}$\newline
\break
Where $p(y_x,y_{\epsilon}|x,\epsilon,t)$ represents the risk neutral probability density for the variables observed at time $T$, conditional on the values at time $t$. $h(x,\epsilon)$ represents a derivative payout at $T$. We then write the right hand integral as:\newline
\break
$\int_{\mathbb{R}^2} g(y_x,y_{\epsilon})p(y_x,y_{\epsilon}|x,\epsilon,t)dy_xdy_{\epsilon}=\int_0^t\int_{\mathbb{R}^2} Lh(y_x,y_{\epsilon})p(y_x,y_{\epsilon}|x,\epsilon,s)dy_xdy_{\epsilon}ds$\newline
\break
Where $L$ represents the operator:\newline
\break
$Lh(x,\epsilon)=\bigg(g_1(x,\epsilon)\sum_{k=2}^{\infty}\frac{f_1(x,\epsilon,\varepsilon)^{k-2}}{k!}\frac{\partial^k}{\partial x^k}+g_2(x,\epsilon)\sum_{l=2}^{\infty}\frac{f_2(x,\epsilon,\varepsilon)^{l-2}}{l!}\frac{\partial^l}{\partial \epsilon^l}\bigg)h(x,\epsilon)$\newline
\break
The Fokker-Planck equation, is given by the adjoint operator $L^*$. Therefore, since:\newline
\break
$\int_0^t\int_{\mathbb{R}^2} Lh(y_x,y_{\epsilon})p(y_x,y_{\epsilon}|x,\epsilon,s)dy_xdy_{\epsilon}ds=\int_0^t\int_{\mathbb{R}^2} h(y_x,y_{\epsilon})L^*p(y_x,y_{\epsilon}|x,\epsilon,s)dy_xdy_{\epsilon}ds$\newline
\break
If we truncate equation (\ref{eq:4}) at a certain order for the derivative: $N$, the result follows by integrating by parts $N$ times. Proceeding with higher and higher $N$, we can match the derivative terms of any arbitary order, and the result follows.
\end{proof}
The objective now, is to write equation (\ref{FokkerPlanck2}) in the form of (\ref{FokkerPlanck1}). To do this we can follow a Moment Matching algorithm. We use the following expansion:\newline
\break
$g(x,\epsilon)p(x-y_x,\epsilon-y_{\epsilon},t)=\sum_{i,j=0}^{\infty} \frac{(-1)^{(i+j)}}{(i+j)!}y_x^iy_{\epsilon}^j\frac{d^{i+j}(g(x,\epsilon)p(x,\epsilon))}{dx^id{\epsilon}^j}$\newline
\break
Inserting this into equation (\ref{FokkerPlanck1}) gives:
\begin{equation}\label{eq:5}
\begin{split}
\frac{\partial p(t,x,\epsilon)}{\partial t}=\frac{1}{2}\frac{\partial^2}{\partial x^2}\bigg(\sum_{i,j=0}^{\infty}\frac{(-1)^{(i+j)}}{(i+j)!}\frac{\partial^{i+j}(g_1(x,\epsilon)p(x,\epsilon))}{\partial x^i\partial\epsilon^j}\int_{-\infty}^{\infty}\int_{-\infty}^{\infty}H(y_x,y_{\epsilon}|x,\epsilon)y_x^iy_{\epsilon}^jdy_xdy_{\epsilon}\bigg)\\
+\frac{1}{2}\frac{\partial^2}{\partial {\epsilon}^2}\bigg(\sum_{i,j=0}^{\infty}\frac{(-1)^{(i+j)}}{(i+j)!}\frac{\partial^{i+j}(g_2(x,\epsilon)p(x,\epsilon))}{\partial x^i\partial\epsilon^j}\int_{-\infty}^{\infty}\int_{-\infty}^{\infty}H(y_x,y_{\epsilon}|x,\epsilon)y_x^iy_{\epsilon}^jdy_xdy_{\epsilon}\bigg)
\end{split}
\end{equation}
Now by equating the coefficients of the derivatives with respect to $x$ and $\epsilon$, between equations (\ref{eq:5}) and (\ref{FokkerPlanck2}) one can calculate the moments of the ``blurring" function $H(y_x,y_{\epsilon}|x,\epsilon)$. For the translation case, $g_2(x,\epsilon)=0$, and the probability density is a function of $x$ only.
\subsection{Moment Matching: Translation Case}
In the translation case, of section 2.1, since the coefficients of each differential term in equation (\ref{eq:1}) is a constant multiplied by $g(x)$, the moments of the ``blurring'' function $H(y)$ will not depend of $x$. Equation (\ref{eq:5}) becomes:
\begin{equation}\label{eq:6}
\frac{\partial p(t,x)}{\partial t}=\frac{1}{2}\bigg(\sum_{j=0}^{\infty}\frac{(-1)^{(j)}}{j!}\frac{d^{(j+2)}(g(x)p(x))}{dx^{(j+2)}}\int_{-\infty}^{\infty}H(y)y^jdy\bigg)
\end{equation}
Similarly, the Fokker-Planck associated with equation (\ref{eq:1}), with $r=0$, is given by:
\begin{equation}\label{eq:7}
\frac{\partial p(t,x)}{\partial t}=\sum_{k=2}^{\infty}\frac{(-1)^k\varepsilon^{k-2}}{k!}\frac{\partial^k (g(x)p(t,x))}{\partial x^k}
\end{equation}
Now the moments of the ``blurring'' function can be matched by equating directly equations (\ref{eq:6}) and (\ref{eq:7}):
\begin{proposition}\label{trans_mom}
Let $H_i$ represent the $i^{th}$ moment of $H(y)$, for the Fokker-Planck equation (\ref{FokkerPlanck2}), relating to the translation case described in section 2.1. Then, $H_i$ is given by:\newline
\break
$H_i=\frac{2(-\varepsilon)^i}{(i+1)(i+2)}$
\end{proposition}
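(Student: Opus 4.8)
The plan is to obtain $H_i$ by directly comparing the two expressions for $\partial_t p$ that have already been derived in the translation case: the Kramers--Moyal form (\ref{eq:7}), coming from the Quantum Black--Scholes equation (\ref{eq:1}), and the ``blurred diffusion'' form (\ref{eq:6}), coming from substituting the Taylor expansion of $g(x)p(x-y)$ into (\ref{FokkerPlanck1}). Both right-hand sides are (formal) series built from the single family of terms $\dfrac{d^{k}}{dx^{k}}\bigl(g(x)p(x)\bigr)$, $k\ge 2$, so $H_i$ is pinned down by equating, for each $k$, the scalar coefficient multiplying $\dfrac{d^{k}}{dx^{k}}(gp)$ on the two sides.

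Concretely, I would first carry out the outer $\partial_x^2$ in (\ref{eq:6}) and relabel its summation index so that it reads $\partial_t p=\tfrac12\sum_{j\ge 0}c_j\,H_j\,\partial_x^{\,j+2}(gp)$ with $c_j$ the explicit numerical coefficient appearing there, and then reindex (\ref{eq:7}) by setting $k=j+2$, so that its right-hand side becomes $\sum_{j\ge 0}\tfrac{(-1)^{j+2}\varepsilon^{\,j}}{(j+2)!}\,\partial_x^{\,j+2}(gp)$. Matching the coefficient of $\partial_x^{\,j+2}(gp)$ gives a single scalar equation in $H_j$, which I solve using $\dfrac{j!}{(j+2)!}=\dfrac{1}{(j+1)(j+2)}$; after renaming $j$ to $i$ this is exactly the asserted formula, the alternating sign $(-\varepsilon)^i$ emerging from combining the $(-1)^k$ produced by the integration by parts behind (\ref{FokkerPlanck2})/(\ref{eq:7}) with the sign of the Taylor expansion used in (\ref{eq:5})--(\ref{eq:6}).

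Two points deserve a sentence of justification in the write-up. First, ``equating coefficients'' must be read at the level of a truncation, exactly as in the proof of the preceding proposition: truncate (\ref{eq:6}) and (\ref{eq:7}) at a common order $N$, identify the finitely many coefficients (legitimate because the operators $p\mapsto\partial_x^{k}(gp)$, $2\le k\le N$, are linearly independent, so the identification is forced once one tests against a suitable family of densities), and then let $N\to\infty$. Second, one should note explicitly that in the translation case the moments $H_i$ do not depend on $x$: every coefficient in (\ref{eq:1}) is a constant times the one function $g(x)$, so $g(x)$ factors out of both (\ref{eq:6}) and (\ref{eq:7}) in the same way and the $x$-dependence cancels in the comparison, as already observed before (\ref{eq:6}).

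I expect no conceptual obstacle here; the only real hazard is bookkeeping. There are three independent sign sources --- the $(-1)^k$ from integration by parts in passing to the Fokker--Planck equation, the $(-1)^{i}$ (appearing as $(-1)^{i+j}$ in the two-variable template (\ref{eq:5})) from the Taylor expansion of the blurred density, and the sign of $\varepsilon$ itself --- and the index shift $k=i+2$ has to be propagated through all of them without error. Stating the coefficient-matching step at a fixed truncation order $N$, so that it is a finite linear-algebra fact rather than a manipulation of divergent series, is the cleanest way to keep this under control.
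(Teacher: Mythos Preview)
Your proposal is exactly the paper's approach: the paper's entire proof is the single sentence that $H_i$ is obtained by equating the coefficients of $\partial^{\,i+2}/\partial x^{\,i+2}$ between equations (\ref{eq:6}) and (\ref{eq:7}). Your remarks on working at a fixed truncation order $N$ and on the $x$-independence of the moments are elaborations the paper omits, but the underlying method is identical.
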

\begin{proof}
$H_i$ follows (for $i\geq 0$) by equating the coefficients for: $\frac{\partial^{(i+2)}}{\partial x^{(i+2)}}$, between equations (\ref{eq:6}) and (\ref{eq:7}).
\end{proof}
We find that, in this case, $H(y)$ is a normalised function that tends to a Dirac function as $\varepsilon$ tends to zero, and for $\varepsilon=0$ we end up with classical 2nd order Fokker-Planck equation. This is discussed further in section 4.
\subsection{Moment Matching: Rotation Case}
In the rotation case of section 2.2, the coefficients of each differential term in equation (\ref{FokkerPlanck2}) are functions of $x$ and $\epsilon$. Therefore, we require the moments for the ``blurring'' function also to be functions of $x$, and $\epsilon$: $H(y_x,y_{\epsilon}|e,\epsilon)$. Once we have calculated the coefficients for the differential terms, we can use these to form an inhomogeneous 2nd order differential equation for the moments of $H(y_x,y_{\epsilon}|e,\epsilon)$.\newline
\break
In this case, from equation (\ref{FokkerPlanck2}) we have: $f_1(x,\epsilon)=\varepsilon\epsilon-(\varepsilon^2/2)x$, and $f_2(x,\epsilon)=-\varepsilon x-(\varepsilon^2/2)\epsilon$. Therefore, the Fokker-Planck equation associated with equation (\ref{eq:3}), with $r=0$, is given by:
\begin{equation}\label{eq:8}
\begin{split}
\frac{\partial p(t,x,\epsilon)}{\partial t}=\sum_{k=2}^{\infty}\frac{1}{k!}\frac{\partial^k \bigg(\big((\varepsilon^2/2)x-\varepsilon\epsilon\big)^{k-2}g_1(x,\epsilon)p(t,x,\epsilon)\bigg)}{\partial x^k}\\
+\sum_{l=2}^{\infty}\frac{1}{l!}\frac{\partial^l \bigg(\big(\varepsilon x+(\varepsilon^2/2)\epsilon\big)^{l-2}g_2(x,\epsilon)p(t,x,\epsilon)\bigg)}{\partial \epsilon^l}
\end{split}
\end{equation}
The moments of the ``blurring'' function will now follow by equating coefficients for the differential terms between equations (\ref{eq:5}), and (\ref{eq:8}).
\begin{proposition}\label{rot_mom}
Where the moments of the ``blurring'' function: $H(y_x,y_{\epsilon}|x,\epsilon)$ are given by:\newline
\break
$a^i_{x}=\int_{-\infty}^{\infty}\int_{-\infty}^{\infty}H(y_x,y_{\epsilon}|x,\epsilon)y_x^idy_xdy_{\epsilon}$\newline
$a^j_{\epsilon}=\int_{-\infty}^{\infty}\int_{-\infty}^{\infty}H(y_x,y_{\epsilon}|x,\epsilon)y_{\epsilon}^jdy_xdy_{\epsilon}$\newline
\break
and $a^0, a^1_x, a^1_{\epsilon}$ are assumed to be:\newline
\break
$a^0=1, a^1_x, a^1_{\epsilon}=0$\newline
\break
Then for the higher moments we have, for $n\geq 2$:
\begin{equation}
\frac{(-1)^na^{n-2}_x+2n\frac{\partial a^{n-1}_x}{\partial x}+n(n-1)\frac{\partial^2 a^{n}_x}{\partial x^2}}{n!}=\frac{((\varepsilon^2/2)x-\varepsilon\epsilon)^{n-2}}{n(n-1)(1-(\varepsilon^2/2))^{(n-1)}}
\end{equation}
\begin{equation}
\frac{(-1)^na^{n-2}_{\epsilon}+2n\frac{\partial a^{n-1}_{\epsilon}}{\partial \epsilon}+n(n-1)\frac{\partial^2 a^{n}_{\epsilon}}{\partial {\epsilon}^2}}{n!}=\frac{((\varepsilon^2/2)\epsilon+\varepsilon x)^{n-2}}{n(n-1)(1-(\varepsilon^2/2))^{(n-1)}}
\end{equation}
\end{proposition}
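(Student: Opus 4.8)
The plan is to read off the moments by equating, derivative order by derivative order, the two expressions already obtained for $\partial_t p$: the Kramers--Moyal form (\ref{eq:8}), which comes from the adjoint of the Quantum Black--Scholes operator, and the ``blurred'' form (\ref{eq:5}), obtained from (\ref{FokkerPlanck1}) by Taylor expanding $p(x-y_x,\epsilon-y_\epsilon)$ and integrating against $H$. Both right-hand sides split into a piece carried by $g_1p$ and pure $x$-derivatives and a piece carried by $g_2p$ and pure $\epsilon$-derivatives --- in the first piece the only $\epsilon$-dependence sits in the coefficient functions $g_1$ and $f_1=(\varepsilon^2/2)x-\varepsilon\epsilon$, and symmetrically for the second --- so matching the coefficient of a pure derivative $\partial_x^n(g_1p)$ involves only the pure-power moments $a_x^i=\int\!\!\int Hy_x^i\,dy_x\,dy_\epsilon$, and matching $\partial_\epsilon^n(g_2p)$ only the $a_\epsilon^i$. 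I would therefore prove the $x$-identity in detail and obtain the $\epsilon$-identity by the symmetry $x\leftrightarrow\epsilon$, $g_1\leftrightarrow g_2$, $f_1\leftrightarrow f_2$, using that $f_2=\varepsilon x+(\varepsilon^2/2)\epsilon$ is affine in $\epsilon$ exactly as $f_1$ is affine in $x$.

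Step one (left side): keep the $j=0$ part of the first sum of (\ref{eq:5}), namely $\tfrac12\,\partial_x^2\big(\sum_{i\ge0}\tfrac{(-1)^i}{i!}a_x^i\,\partial_x^i(g_1p)\big)$, apply the Leibniz rule to the outer $\partial_x^2$, and collect the coefficient of $\partial_x^n(g_1p)$. Only $i\in\{n-2,n-1,n\}$ contribute, producing the linear combination of $a_x^{n-2}$, $\partial_x a_x^{n-1}$ and $\partial_x^2 a_x^n$ (with $n$-dependent binomial weights) that appears in the numerator on the left of the claimed identity; the orders $n=0,1$, where the right side of (\ref{eq:8}) contributes nothing because its sum starts at $k=2$, fix $a^0$ and $a_x^1$ up to affine functions and are normalised to $a^0=1$, $a_x^1=0$.

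Step two (right side): expand $\sum_{k\ge2}\tfrac1{k!}\partial_x^k\big(f_1^{\,k-2}g_1p\big)$ by Leibniz, using that $f_1$ is affine in $x$, so $\partial_x^r f_1^{\,k-2}=\tfrac{(k-2)!}{(k-2-r)!}(\varepsilon^2/2)^r f_1^{\,k-2-r}$ and the inner sum terminates at $r=k-2$; re-index by the surviving derivative order $n=k-r$ and collect the coefficient of $\partial_x^n(g_1p)$. What remains is a single sum over $k\ge n$ whose combinatorial weight is the negative-binomial series $\sum_{s\ge0}\binom{n+s-2}{s}(\varepsilon^2/2)^s=(1-\varepsilon^2/2)^{-(n-1)}$, which supplies the $(1-\varepsilon^2/2)^{n-1}$ in the denominator and leaves $f_1^{\,n-2}=((\varepsilon^2/2)x-\varepsilon\epsilon)^{n-2}$ as the only surviving power. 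Step three is simply to equate the $\partial_x^n(g_1p)$ coefficients from the two steps for each $n\ge2$, which is the asserted second-order ODE for $a_x^n$, and then to run the mirror-image argument in $\epsilon$.

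The step I expect to be the real obstacle is the resummation in step two: three distinct occurrences of $\varepsilon^2/2$ (inside $f_1$, as $\partial_x f_1$, and in the geometric weight) have to be tracked without collision, the shift of summation index from $k$ to $n$ has to be done carefully so that the surviving power is $f_1^{\,n-2}$ rather than $f_1^{\,n}$, and the identification of the resulting series with $(1-\varepsilon^2/2)^{-(n-1)}$ is the one non-mechanical identity in the argument. Everything else is Leibniz bookkeeping and re-indexing; the only conceptual point to state clearly is that $g_1p$, $g_2p$ and their derivatives may be treated as independent test functions, which is what licenses the order-by-order matching.
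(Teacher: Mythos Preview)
Your proposal is correct and follows essentially the same route as the paper's own proof: Leibniz-expand the Kramers--Moyal side (\ref{eq:8}) using the affinity of $f_1$ in $x$, resum the resulting series (the paper phrases your negative-binomial identity as ``differentiate $1/(1-v)$ $(n-2)$ times''), then Leibniz-expand the outer $\partial_x^2$ in (\ref{eq:5}) to extract the three-term recursion in the moments, and equate coefficients of $\partial_x^n(g_1p)$; the $\epsilon$-case is handled by the same symmetry you invoke. Your identification of the resummation as the only non-mechanical step matches exactly where the paper's argument has content.
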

\begin{proof}
We first calculate the coefficients for $\frac{\partial^n (g_1(x,\epsilon)p(x,\epsilon))}{\partial x^n}$ from equation (\ref{eq:8}).\newline
\break
The 2nd order coefficient is given by:\newline
\break
$\sum_{i\geq 2} \frac{(i-2)!(\varepsilon^2/2)^{i-2}\binom{i}{2}}{i!}=\frac{1}{2}\sum_{i\geq 0} (\varepsilon^2/2)^i=\frac{1}{2(1-(\varepsilon^2/2))}$\newline
\break
Similarly, the 3rd order coefficient is given by:\newline
\break
$\sum_{i\geq 3} \frac{(i-2)!(\varepsilon^2/2)^{(i-2)}\binom{i}{3}((\varepsilon^2/2)x-\varepsilon\epsilon)}{i!}=\frac{((\varepsilon^2/2)x-\varepsilon\epsilon)}{3!}\sum_{i\geq 0} (i+1)(\varepsilon^2/2)^i=\frac{((\varepsilon^2/2)x-\varepsilon\epsilon)}{3!(1-(\varepsilon^2/2))^2}$\newline
\break
In general, the nth order coefficient is given by:\newline
\break
$\sum_{i\geq n} \frac{(i-2)!(\varepsilon^2/2)^{(i-2)}\binom{i}{n}((\varepsilon^2/2)x-\varepsilon\epsilon)^{n-2}}{i!(n-2)!}$\newline
\break
$=\frac{((\varepsilon^2/2)x-\varepsilon\epsilon)^{(n-2)}}{n!}\sum_{i\geq 0} (i+1)(i+2)...(i+n-2)(\varepsilon^2/2)^i$\newline
\break
The final summation can be calculated by differentiating $(n-2)$ times, the infinite sum $1/(1-v)$, where $v=(\varepsilon^2/2)$.\newline
\break
Therefore, the coefficient for $n\geq 2$ is given by:
\begin{equation}\label{x_coeff}
\frac{((\varepsilon^2/2)x-\varepsilon\epsilon)^{n-2}}{n(n-1)(1-(\varepsilon^2/2))^{(n-1)}}\frac{\partial^n (g_1(x,\epsilon)p(x,\epsilon))}{\partial x^n}
\end{equation}
Following similar logic for $\epsilon$ we have the coefficient:
\begin{equation}\label{epsilon_coeff}
\frac{((\varepsilon^2/2)\epsilon+\varepsilon x)^{n-2}}{n(n-1)(1-(\varepsilon^2/2))^{(n-1)}}\frac{\partial^n (g_2(x,\epsilon)p(x,\epsilon))}{\partial \epsilon^n}
\end{equation}
These coefficients can now be used to calculate a 2nd order inhomogeneous differential equation for the moments of $H(y_x,y_{\epsilon}|x,\epsilon)$. We start by expanding the $\partial^2/\partial x^2$, and $\partial^2/\partial\epsilon^2$ in equation (\ref{eq:5}).\newline
\break
Since, we assume from section 2.2, that $x,\epsilon$ are uncorrelated, equation (\ref{eq:5}) can be written:
\begin{equation}\label{eq:9}
\begin{split}
\frac{\partial p(t,x,\epsilon)}{\partial t}=
\frac{1}{2}\sum_{i=0}^{\infty}\frac{(-1)^{(i)}}{i!}\Bigg(\frac{\partial^{i}(g_1(x,\epsilon)p(x,\epsilon))}{\partial x^i}\frac{\partial^2 a^i_{x}}{\partial x^2}+\frac{\partial^{i+2}(g_1(x,\epsilon)p(x,\epsilon))}{\partial x^{i+2}}a^i_{x}\\+2\frac{\partial^{i+1}(g_1(x,\epsilon)p(x,\epsilon))}{\partial x^{i+1}}\frac{\partial a^i_{x}}{\partial x}\Bigg)\\
+\frac{1}{2}\sum_{j=0}^{\infty}\frac{(-1)^{(j)}}{j!}\Bigg(\frac{\partial^{j}(g_2(x,\epsilon)p(x,\epsilon))}{\partial \epsilon^j}\frac{\partial^2 a^j_{\epsilon}}{\partial \epsilon^2}+\frac{\partial^{j+2}(g_1(x,\epsilon)p(x,\epsilon))}{\partial \epsilon^j}a^j_{\epsilon}\\+2\frac{\partial^{j+1}(g_1(x,\epsilon)p(x,\epsilon))}{\partial \epsilon^{j+1}}\frac{\partial a^j_{\epsilon}}{\partial \epsilon}\Bigg)\\
\end{split}
\end{equation}
The coefficients for $\frac{\partial^n (g_1(x,\epsilon)p(x,\epsilon))}{\partial x^n}$ from equation (\ref{eq:9}) are now given by: $\frac{\partial^2 a^0}{\partial x^2}(g_1(x,\epsilon)p(x,\epsilon))$ for $n=0$, $(\frac{\partial^2 a^1_x}{\partial x^2}+2\frac{\partial a^0}{\partial x})\frac{\partial (g_1(x,\epsilon)p(x,\epsilon))}{\partial x}$ for $n=1$, and:
\begin{equation}\label{eq:10}
\frac{(-1)^na^{n-2}_x+2n\frac{\partial a^{n-1}_x}{\partial x}+n(n-1)\frac{\partial^2 a^{n}_x}{\partial x^2}}{n!}\frac{\partial^n (g_1(x,\epsilon)p(x,\epsilon))}{\partial x^n}
\end{equation}
\ for $n\geq 2$. Similarly, for $\epsilon$ we have:
\begin{equation}\label{eq:11}
\frac{(-1)^na^{n-2}_{\epsilon}+2n\frac{\partial a^{n-1}_{\epsilon}}{\partial \epsilon}+n(n-1)\frac{\partial^2 a^{n}_{\epsilon}}{\partial {\epsilon}^2}}{n!}\frac{\partial^n (g_2(x,\epsilon)p(x,\epsilon))}{\partial \epsilon^n}
\end{equation}
We now make the assumption that $H$ is a normalised probability distribution with expectation zero for $x$ and $\epsilon$. Ie, $\frac{\partial a_0}{\partial x}=0$, $a^1_x=0$, and $a^1_{\epsilon}=0$. These assumptions ensure the coefficients with $n=0,1$ equate to zero on both sides of equation (\ref{eq:8}). The proposition follows by equating equations (\ref{x_coeff})/(\ref{eq:10}) and (\ref{epsilon_coeff})/(\ref{eq:11}).
\end{proof}
\section{Monte-Carlo Methods \& Numerical Simulations}
In this section, we give a brief overview of McKean stochastic differential equations, before introducing how the particle method, discussed in the book by Guyon \& Henry-Labord\`ere: \cite{Guyon}, can be used in their simulation. We then go on to present numerical results from the bid-offer model discussed above, placing particular emphasis on understanding how quantum effects become apparent through small transformations applied to a classical Black-Scholes system.
\subsection{McKean Stochastic Differential Equations}
McKean nonlinear stochastic differential equations were introduced in \cite{McKean}, and refer to SDEs, where the drift \& volatility coefficients depend on the underlying probability law for the stochastic process. Following notation from \cite{Guyon} we have:\newline
\break
$dX_t=b(t,X_t,\mathbb{P}_t)dt+\sigma(t,X_t,\mathbb{P}_t)dW_t$\newline
\break
These are then related to the nonlinear Fokker Planck equation:
\begin{equation}
\frac{\partial p}{\partial t}=\frac{1}{2}\sum_{i,j}\frac{\partial^2 (\sigma_i(t,x,\mathbb{P}_t)\sigma_j(t,x,\mathbb{P}_t)p(t,x))}{\partial x_i\partial x_j}-\sum_{i}\frac{\partial (b^i(t,x,\mathbb{P}_t))}{\partial x_i}
\end{equation}
In this case, we can write equation (\ref{FokkerPlanck1}) in this form. We have for $r=0$, $b^1(t,x,\epsilon,\mathbb{P}_t)=b^2(t,x,\epsilon,\mathbb{P}_t)=0$ and $\sigma_1(t,x,\epsilon,\mathbb{P}_t)=\sqrt{g_1(x,\epsilon)\mathbb{E}^p\bigg[\frac{H(x-y_x,\epsilon-y_{\epsilon}|x,\epsilon)}{p(x,\epsilon,t)}\bigg]}$, $\sigma_2(t,x,\epsilon,\mathbb{P}_t)=\sqrt{g_2(x,\epsilon)\mathbb{E}^p\bigg[\frac{H(x-y_x,\epsilon-y_{\epsilon}|x,\epsilon)}{p(x,\epsilon,t)}\bigg]}$.\newline
\break
Therefore, we can simulate the solution to equation (\ref{FokkerPlanck1}) by first calculating the function $H(x-y_x,\epsilon-y_{\epsilon})$ using a moment matching algorithm, and then simulating the following McKean SDE, with uncorrelated Wiener processes $dW^1, dW^2$:
\begin{equation}\label{McKean}
\begin{split}
dx=\sqrt{\frac{g_1(x,\epsilon)}{p(x,\epsilon,t)}\mathbb{E}^{p(y)}\big[H(x-y_x,\epsilon-y_{\epsilon}|x,\epsilon)\big]}dW^1\\
d\epsilon=\sqrt{\frac{g_2(x,\epsilon)}{p(x,\epsilon,t)}\mathbb{E}^{p(y)}\big[H(x-y_{\epsilon},\epsilon-y_{\epsilon}|x,\epsilon)\big]}dW^2\\
\end{split}
\end{equation}
The simulation of the above SDE relies on the {\it particle method} outlined in Guyon \& Henry-Labord\`ere's book {\it Nonlinear Option Pricing} chapters 10, 11 (cf: \cite {Guyon}).\newline
\break
Each path $(x^i,\epsilon^i)$ now interacts with the other paths: $(x^j,\epsilon^j), j\neq i$ during the simulation process, and the convergence of the method relies on the so called {\it propagation of the chaos} property. This states:
\begin{definition}
For all functions $\phi(x,\epsilon,t)\in C_0(\mathbb{R}^2)$:
\begin{equation}
\frac{1}{N}\sum_{j=1}^{N} \phi(x^j,\epsilon^j)\xrightarrow{N\rightarrow\infty} \int_{\mathbb{R}^2} \phi(x,\epsilon,t)p(x,\epsilon,t)dxd\epsilon
\end{equation}
\end{definition}
In our case, the SDE (\ref{McKean}), is a McKean-Vlasov process, and we have from Guyon, Henry-Labord\`ere (cf: \cite{Guyon} Theorem 10.3), and originally Sznitman (cf: \cite{Sznitman}), that the propagation of the chaos property holds.
\subsection{Particle Method}
The first step is to discretize the SDE: (\ref{McKean}), as follows:
\begin{equation}\label{discrete_SDE}
\begin{split}
dx^i=\bigg(\sum_{j=1}^{N} H(x^j-x^i,\epsilon^j-\epsilon^i)\frac{P(x^j,\epsilon^j)}{P(x^i,\epsilon^i)}g_1(x^i,\epsilon^i)\bigg)^{0.5}dW^{1,i}\\
d\epsilon^i=\bigg(\sum_{j=1}^{N} H(x^j-x^i,\epsilon^j-\epsilon^i)\frac{P(x^j,\epsilon^j)}{P(x^i,\epsilon^i)}g_2(x^i,\epsilon^i)\bigg)^{0.5}dW^{2,i}
\end{split}
\end{equation}
Where $P(x^j,\epsilon^j)$ represents a suitably discretized probability function. The algorithm then proceeds as follows:
\begin{enumerate}
\item Solve for the moments of the ``blurring'' function $H(x-y_x,\epsilon-y_{\epsilon}|x,\epsilon)$ using propositions \ref{trans_mom}, and \ref{rot_mom}.
\item Choose a parameterised distribution to approximate $H(x-y_x,\epsilon-y_{\epsilon}|x,\epsilon)$, and fit the parameters using the calculated moments. For example, approximate $H(x-y_x,\epsilon-y_{\epsilon}|x,\epsilon)$ as a univariate/bivariate normal distribution.
\item Simulate the 1st time step, $t_1$, using the value of $H(0,0|x_0,\epsilon_0)$, for starting positions $x_0,\epsilon_0$.
\item After each simulation, allocate the simulated paths into discrete probability buckets: $P(x^j,\epsilon^j)$, for paths $j=1$ to $N$.
\item Proceed from the $t_{k-1}$ to $t_k$ timestep, using (\ref{discrete_SDE}), the value of $H(x-y_x,\epsilon-y_{\epsilon}|x,\epsilon)$, and the discrete buckets at $t_{k-1}$.
\item Iterate steps 4 \& 5 until the final maturity: $t_F$.
\end{enumerate}
\subsection{Modelling the Market Fear Factor}
We can see from (\ref{discrete_SDE}), that small translations, will lead to a variance scaling factor:\newline
\break
$\sum_{j=1}^{N} H(x^j-x^i,\epsilon^j-\epsilon^i)\frac{P(x^j,\epsilon^j)}{P(x^i,\epsilon^i)}$\newline
\break
This will have the impact of reducing the volatility of those paths which lie in the middle of the ``bell curve'', owing to the negative curvature of the probability law at these points - probability mass is spread by the ``blurring'' function to lower probability points.\newline
\break
Similarly, at the extremes of the probability density curve where the curvature is positive, probability mass is spread to areas with net higher probability. In essence the market memory of a recent extreme event, will lead to a higher market volatility at the next time step.\newline
\break
This effect differs from the negative skew observed in local volatility models (for example the work by Dupire: cf \cite{Dupire}), and from stochastic volatility models (for example Heston: \cite{Heston}), in the sense that the increase in volatility is linked to recent random moves in the tail of the probability distribution, rather than to the level of the stochastic volatility or a static function of the price, and time.\newline
\break
To highlight the difference, in the process given by equation (\ref{discrete_SDE}), one could allow for periodic rebalancing of the process. For example, one could replace the unconditional probability, with the probability conditional on the previous step. In this way, the level of the volatility would depend purely on a ``memory'' of recent price history, rather than on the absolute level of the market price, or an additional random variable. The market responds to large moves with a heightened fear factor. The study of modelling such processes with rebalancing, will involve advanced techniques for calculating the conditional probabilities, and we defer detailed study to a future work.\newline
\break
\subsection{Numerical Results}
In this section, we simulate the one-factor process described in section 2.1, and 3.1. In this case, we approximate $H(y)$ using a normal distribution using the moments from proposition 3.2: $N(\frac{\varepsilon}{3},\frac{\varepsilon^2}{18})$.\newline
\break
The non-zero 1st moment, will lead to an upside/downside bias to the ``market fear factor'' effect. Essentially, by introducing a translation in the negative $x$ direction, one introduces downside `fear' into the model.\newline
\break
Figures 1 \& 2 below, illustrate the results from a 2 step Monte-Carlo process, with $g(x)=0.01x^2$, starting value: $x_0=1$, 100K Monte-Carlo paths, and 500 discrete probability buckets. The scatter plot shows the magnitude of the proportional return on the 1st time-step on the horizontal axis, and the second time-step on the vertical axis:\newline
\break
\includegraphics[scale=0.5]{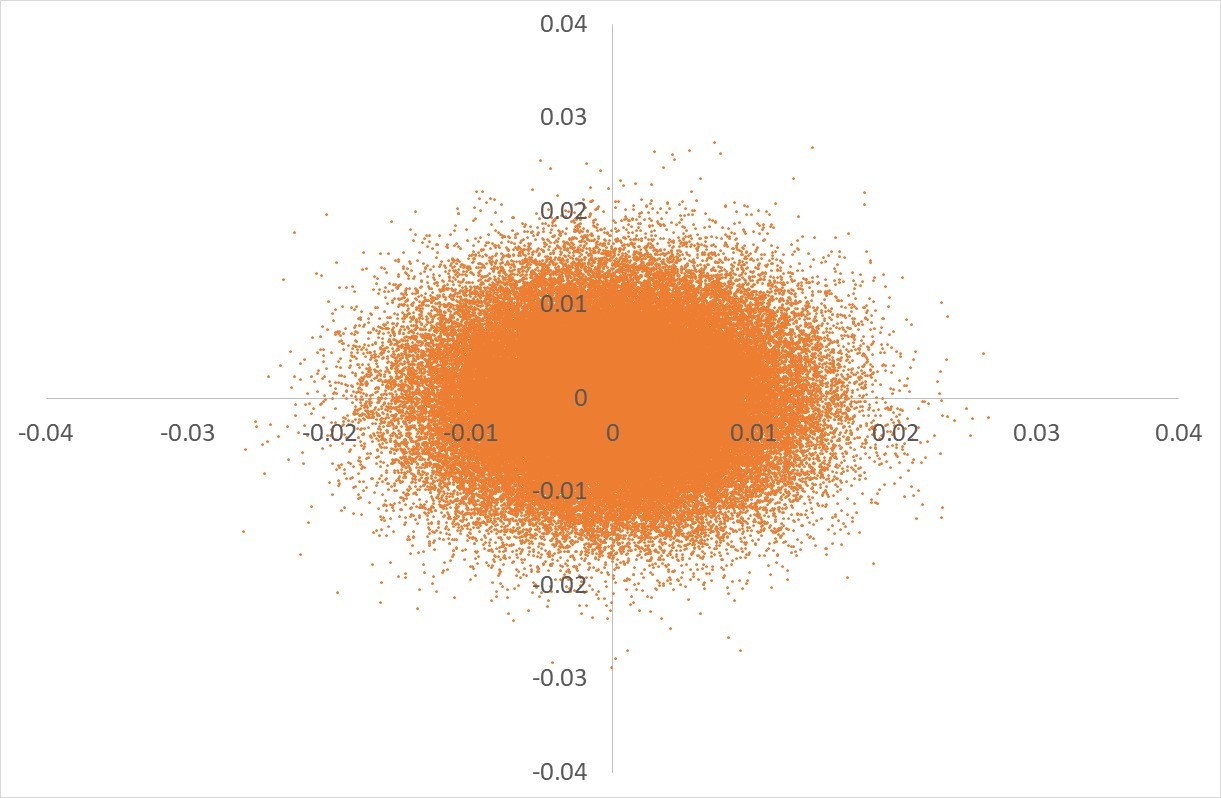}\newline
Figure 1: $\varepsilon=0$, horizontal axis represents the proportional return for the first time-step, vertical axis represents the second second time-step.\newline
\break
\includegraphics[scale=0.5]{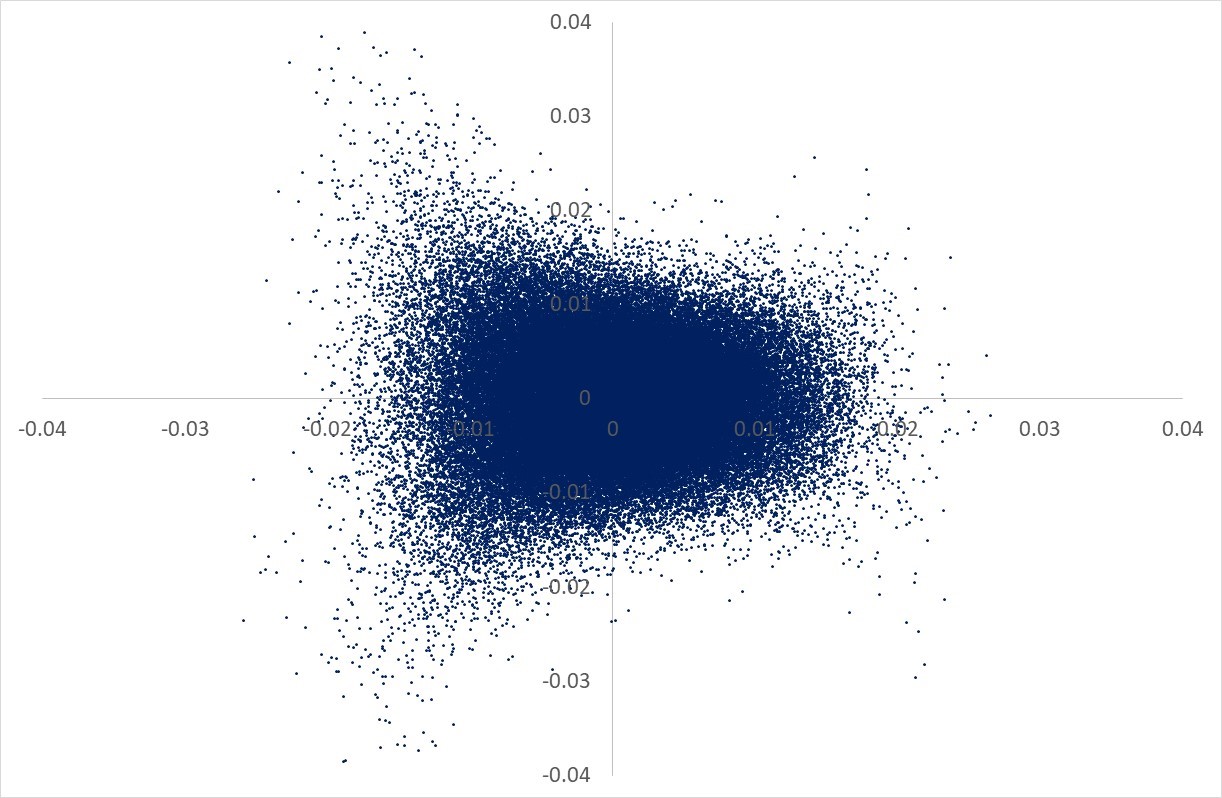}\newline
Figure 2: The results for $\varepsilon=0.02$, horizontal axis represents the proportional return for the first time-step, vertical axis represents the second time-step.\newline
\break
Figure 1 shows the results for $\varepsilon=0$. This is a classical Black-Scholes system, and there is no correlation between the magnitude \& direction of the 1st and 2nd time-steps.\newline
\break
Figure 2 shows the proportional returns for $\varepsilon=0.02$ (in blue), overlaid on top of the $\varepsilon=0$ results (in orange). The volatility of the second step is reduced on those paths where the first time-step has been small. There is a slight increased second step volatility for those paths with large positive first steps, and significant second step volatility for those paths with a large negative first step. In effect, the drop in market prices has introduced ``fear'' into these paths.\newline
\break
The final chart shows the probability distributions for the natural logarithm of the simulated value after 50 one day time-steps. The non-zero translation results in a natural skewness in the distribution.\newline
\break
\includegraphics[scale=0.5]{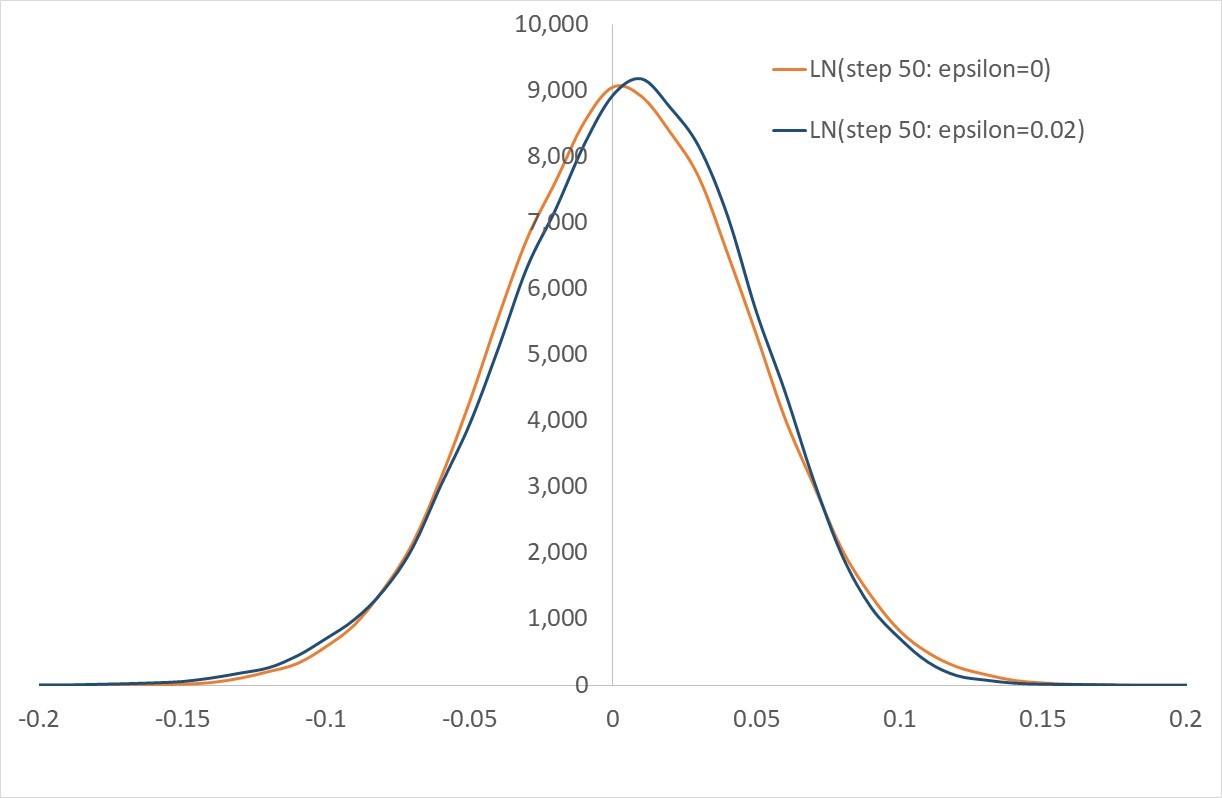}\newline
Figure 3: Distribution for the natural log of the final price after 50 one day time-steps. 100K Monte-Carlo paths, and 500 discrete probability buckets.
\section{Conclusions}
In this paper, we demonstrate how unitary transformations can be used to model novel quantum effects in the Quantum Black-Scholes system of Accardi \& Boukas (cf \cite{AccBoukas}).\newline
We show how these quantum stochastic processes can also be modelled using nonlocal diffusions, and simulated using the particle method outlined by Guyon \& Henry-Labord\`ere in \cite{Guyon}.\newline
By introducing a bid-offer spread parameter, and extending the Accardi-Boukas framework to 2 variables, we show how rotations, in addition to translations, can be applied. Thus, a richer representation of the information contained in the current market leads to a wider variety of unitary transformations that can be used.\newline
In section 4, using a Monte-Carlo simulation, we illustrate how introducing a translation to the one dimensional model leads to a skewed distribution, whereby recent market down moves leads to increased volatility going forward. In effect, the market retains memory of recent significant moves.\newline
In \cite{Dupire}, Dupire shows how to calibrate a local volatility to the current vanilla option smile. This enables a Monte-Carlo simulation that is fully consistent with current market option prices. Carrying out the same analysis, using the new Quantum Fokker-Planck equations, is another important next step to consider as a future development of the current work.

\end{document}